\newtheorem{theorem}{Theorem}
\newtheorem{observation}{Observation}
\newcommand{\rev}[1]{{\color{blue}#1}} 
\newcommand{\com}[1]{\textbf{\color{red}(COMMENT: #1)}} 
\newcommand{\comg}[1]{\textbf{\color{green}(COMMENT: #1)}}
\newcommand{\response}[1]{\textbf{\color{magenta}(RESPONSE: #1)}} 
\newcommand{\rev}[1]{#1}
\newcommand{\com}[1]{}
\newcommand{\comg}[1]{}
\newcommand{\response}[1]{}
\begin{document}
%
\title{Investment and Pricing with Spectrum Uncertainty: A Cognitive Operator's Perspective}
%
%
%
%

\author{Lingjie~Duan,~\IEEEmembership{Student Member,~IEEE,}
        Jianwei~Huang,~\IEEEmembership{Member,~IEEE,}
        and~Biying~Shou
\IEEEcompsocitemizethanks{\IEEEcompsocthanksitem Lingjie Duan and
Jianwei Huang are with the Department of Information Engineering,
The Chinese University of Hong Kong,
Hong Kong.\protect\\
E-mail: \{dlj008, jwhuang\}@ie.cuhk.edu.hk. \IEEEcompsocthanksitem
Biying Shou is with the Department of Management Sciences, City
University of Hong Kong, Hong Kong.
E-mail: biying.shou@cityu.edu.hk.}
\thanks{Part of the results has appeared in IEEE INFOCOM, San Diego, USA, March 2010 \cite{DuanHuangShou2010}.}}



\IEEEcompsoctitleabstractindextext{%
\begin{abstract}
This paper studies the optimal investment and pricing decisions of a
cognitive mobile virtual network operator (C-MVNO) under
\emph{spectrum supply uncertainty}. Compared with a traditional MVNO
who often leases spectrum via long-term contracts, a C-MVNO can
acquire spectrum dynamically in short-term by both \emph{sensing}
the empty ``spectrum holes'' of licensed bands and \emph{dynamically
leasing} from the spectrum owner. As a result, a C-MVNO can make
flexible investment and pricing decisions to match the current
demands of the secondary unlicensed users. Compared to dynamic
spectrum leasing, spectrum sensing is typically cheaper, but the
obtained useful spectrum amount is random due to primary licensed
users' stochastic traffic. The C-MVNO needs to determine the optimal
amounts of spectrum sensing and leasing by evaluating the trade off
between cost and uncertainty. The C-MVNO also needs to determine the
optimal price to sell the spectrum to the secondary unlicensed
users, taking into account wireless heterogeneity of users such as
different maximum transmission power levels and channel gains. We
model and analyze the interactions between the C-MVNO and secondary
unlicensed users as a Stackelberg game. We show several interesting
properties of the network equilibrium, including threshold
structures of the optimal investment and pricing decisions, the
independence of the optimal price on users' wireless
characteristics, and guaranteed fair and predictable \com{spectrum
allocations}QoS among users. We prove that these properties hold for
general SNR regime and general continuous distributions of sensing
uncertainty. We show that spectrum sensing can significantly improve
the C-MVNO's expected profit and users' payoffs.
\end{abstract}

\begin{keywords}
\rev{Cognitive radio, spectrum trading, spectrum sensing, dynamic
spectrum leasing, spectrum pricing, Stackelberg game, Subgame Perfect
equilibrium.}
\end{keywords}}

\maketitle

\IEEEdisplaynotcompsoctitleabstractindextext

%
\IEEEpeerreviewmaketitle

\section{Introduction}
\IEEEPARstart{W}{ireless} spectrum is typically considered as a
scarce resource, and is traditionally allocated through static
licensing. Field measurements show that, however, most spectrum
bands are often under-utilized even in densely populated urban areas
(\cite{sharedspectrum2005}). To achieve more efficient spectrum
utilization, people have proposed various dynamic spectrum access
approaches  including
hierarchical-access and dynamic exclusive use
(\cite{Bahl,IEEEhowto:Zhao,Jayaweera,Buddhikot,Faulhaber}).
Hierarchical-access allows a secondary (unlicensed) network operator
or users to opportunistically access the spectrum without affecting
the normal operation of the spectrum owner who serves the primary
(licensed) users. Dynamic exclusive use allows a spectrum owner to
dynamically transfer and trade the usage right of its licensed
spectrum to a third party (e.g., a secondary network operator or a
secondary end-user) in the spectrum market. {This paper considers a
secondary operator who obtains spectrum resource via both
\emph{spectrum sensing} as in the hierarchical-access approach and
\emph{dynamic spectrum leasing} as in the dynamic exclusive use
approach.}

Spectrum sensing obtains awareness of the spectrum usage and
existence of primary users, by using geolocation and database,
beacons, or cognitive radios (e.g.,
\cite{Wellens,Tu,Ganesan,Yucek}).  The primary users are oblivious
to the presence of secondary cognitive network operators or users.
The secondary network operator or users can sense and utilize the
unused ``spectrum holes'' in the licensed spectrum without violating
the usage rights of the primary users (e.g.,
\cite{IEEEhowto:Zhao,Faulhaber}). Since the secondary operator or
users does not know the primary users' activities before sensing,
the amount of useful spectrum obtained through sensing is uncertain
(e.g.
\cite{Willkomm08,Willkomm09,huang2009optimal,huang2008opportunistic}).

With dynamic spectrum leasing, a spectrum owner allows secondary
users to operate in their temporarily unused part of spectrum in
exchange of economic return (e.g.,
\cite{Simeone,Jayaweera,Faulhaber}). The dynamic spectrum leasing
can be short-term or even real-time (e.g.,
\cite{Chapin,Chapin2,IEEEhowto:Jia}), and can be at a similar time
scale of the spectrum sensing operation.

In this paper, we study the operation of a cognitive radio network
that consists a cognitive mobile virtual network operator (C-MVNO)
and a group of secondary unlicensed users. The word ``virtual''
refers to the fact that the operator does not own the wireless
spectrum bands or even the physical network infrastructure. The
C-MVNO serves as the interface between the spectrum owner and the
secondary end-users. {The word ``cognitive'' refers to the fact that
the operator} can obtain spectrum resource through both spectrum
sensing using the cognitive radio technology and dynamic spectrum
leasing from the spectrum owner. The operator then resells the
obtained spectrum (bandwidth) to secondary users to maximize its
profit. {The proposed model is a hybrid of the hierarchical-access
and dynamic exclusive use models. It is applicable in various
network scenarios, such as achieving efficient utilization of the TV
spectrum in IEEE 802.22 standard\cite{stevenson2009ieee}. This
standard suggests that the secondary system should operate on a
point-to-multipoint basis, i.e., the communications will happen
between secondary base stations and secondary customer-premises
equipment. The base stations can be operated by one or several
C-MVNOs introduced in this paper.}

Compared with a  traditional MVNO who only leases spectrum through
long-term contracts, a C-MVNO can dynamically adjust its sensing and
leasing decisions to match the changes of users' demand at a short
time scale. {Moreover, sensing often offers a cheaper way to obtain
spectrum compared with leasing. The cost of sensing mainly includes
the sensing time and energy, and does not include explicit cost paid
to the spectrum owner. With a mature spectrum sensing technology,
sensing cost should be reasonable low (otherwise there is no point
of using cognitive radio). Spectrum leasing, however, involves
direct negotiation with the spectrum owner. When the spectrum owner
determines the cost of leasing, it needs to calculate its
opportunity cost, i.e., how much revenue the spectrum can provide if
the spectrum owner provides services directly over it. It is
reasonable to believe that the leasing cost is more expensive than
the sensing cost in most cases\footnote{{The analysis of this paper
also covers the case where sensing is more expensive than leasing,
which is a trivial case to study.}}. Although sensing is cheaper,
the amount of spectrum obtained through sensing is often uncertain
due to the stochastic nature of primary users' traffic.} It is thus
critical for a C-MVNO to find the right balance between cost and
uncertainty.

Our key results and contributions are summarized as follows. For
simplicity, we refer to the C-MVNO as ``operator'', secondary users
as ``users'', and ``dynamic leasing'' as ``leasing''.

\begin{itemize}


\item{\emph{A \com{dynamic}\rev{Stackelberg} game model}}: We model and analyze the interactions
between the operator and the users in the spectrum market as a
{Stackelberg game}. \rev{As the leader,} the operator makes
the sensing, leasing, and pricing decisions sequentially. \rev{As
the followers,} users then purchase bandwidth from the operator to
maximize their payoffs. \com{The users' bandwidth
demands payoff functions incorporate the heterogeneity of maximum
transmission power levels and channels gains.}\rev{By using backward
induction, we prove the existence and uniqueness of the equilibrium, and show how various system parameters (i.e., sensing and leasing costs, users' transmission power and channel conditions) affect the equilibrium behavior.} \com{Despite the complexity
of the model, we are able to fully characterize the unique
equilibrium behaviors of the operator and users.}

\item \emph{Threshold structures of the optimal investment and pricing decisions}:
At the equilibrium, the operator will sense the spectrum only if the
sensing cost is cheaper than a threshold. Furthermore, it will lease
some spectrum only if the resource obtained through sensing is
below a threshold. Finally, the operator will charge a constant
price to the users if the total bandwidth obtained through
sensing and leasing does not exceed a threshold. \rev{The thresholds are easy to compute and the corresponding decisions rules are easy to implement in practice.}


\item \emph{Fair and predictable \com{resource allocation}\rev{QoS}}:
The operator's optimal pricing decision is independent of the users'
wireless characteristics. Each user receives a \com{bandwidth
allocation}\rev{payoff} that is proportional to its channel gain and
transmission power, which leads to the same signal-to-noise (SNR)
for all users.


\item \emph{Impact of spectrum sensing}: We show that the availability of
sensing always increases the operator's profit in the
\emph{expected} sense. {The actual realization of the profit at a
particular time heavily depends on the spectrum sensing results.}
Users always get better payoffs when the operator performs spectrum
sensing.
\end{itemize}

Section \ref{sec:NetworkModel} introduces the network model and
problem formulation. In Section \ref{sec:BackwardInduction}, we
analyze the game model through backward induction. We discuss
various insights obtained from the equilibrium analysis and present
some numerical results in Section \ref{sec:Equilibrium}. In Section
\ref{sec:SensingImpact}, we show the impact of spectrum  sensing on
both the operator and users. We conclude in Section
\ref{sec:conclusion} and outline some future research directions.

\subsection{Related Work}
There is a growing interest in studying the investment and pricing
decisions of cognitive network operators recently. Several auction
mechanisms have been proposed  to study the investment problems of
cognitive network operators (e.g.,
\cite{sengupta2007economic,jia2009revenue}). Other recent results
studied the pricing decisions of the cognitive network operators who
interact with a group of secondary users (e.g.,
{\cite{manshaei-spectrum,IEEEhowto:Niyato,Inaltekin,IEEEhowto:Ileri,IEEEhowto:Xing,zhang2009stackelberg,niyato2007hierarchical,niyato2009dynamics}}).
\cite{sengupta2007economic} considered  users' queueing delays and
obtained most results through simulations. \cite{manshaei-spectrum}
presented a recent survey on the spectrum sharing games of network
operators and cognitive radio networks. \cite{IEEEhowto:Niyato}
studied the competition among multiple service providers without
modeling users' wireless details. \cite{Inaltekin} considered a
pricing competition game of two operators and adopted a simplified
wireless model for the users. \cite{IEEEhowto:Ileri} derived users'
demand functions based on the acceptance probability model for the
users. \cite{IEEEhowto:Xing} explored demand functions based on both
quality-sensitive and price-sensitive buyer population models.
{\cite{zhang2009stackelberg} formulated the interaction between one
primary user (monopolist) and multiple secondary users as a
Stackelberg game. The primary user uses some secondary users as
relays and leases its bandwidth to those relays to collect revenue.
\cite{niyato2007hierarchical} studied a multiple-level spectrum
market among primary, secondary, and tertiary services where global
information is not available. \cite{niyato2009dynamics} considered
the short-term spectrum trading between multiple primary users and
multiple secondary users. The spectrum buying behaviors of secondary
users are modeled as an evolutionary game, while selling behaviors
of primary users are modeled as a noncooperative game.
\cite{IEEEhowto:Ileri,IEEEhowto:Xing,zhang2009stackelberg,niyato2007hierarchical,niyato2009dynamics}
obtained most interesting results through simulations. There are
only few papers (e.g.,
\cite{IEEEhowto:Jia,Jia2,niyato2007hierarchical}) that jointly
considered the spectrum investment and service pricing problem as
this paper.  None of the above work considered the impact of supply
uncertainty due to spectrum sensing.}

%

Our  model of spectrum uncertainty  is related to the random-yield
model in supply chain management (e.g., \cite{IEEEhowto:Babich,
IEEEhowto:Sarang,Chain}).
{The unique wireless aspects of the system model lead to new
solutions and insights  in our problem.}

{Our paper represents a first attempt of understanding how spectrum
uncertainty impacts the economic decisions of an cognitive radio
operator. To obtain sharp insights, we focus on a stylized model
where a monopolist operator faces a group of secondary users. There
are many more interesting research issues in this area. Some are
further discussed in Section \ref{sec:conclusion}.}

\begin{figure}[tt]
\centering
\includegraphics[width=0.5\textwidth]{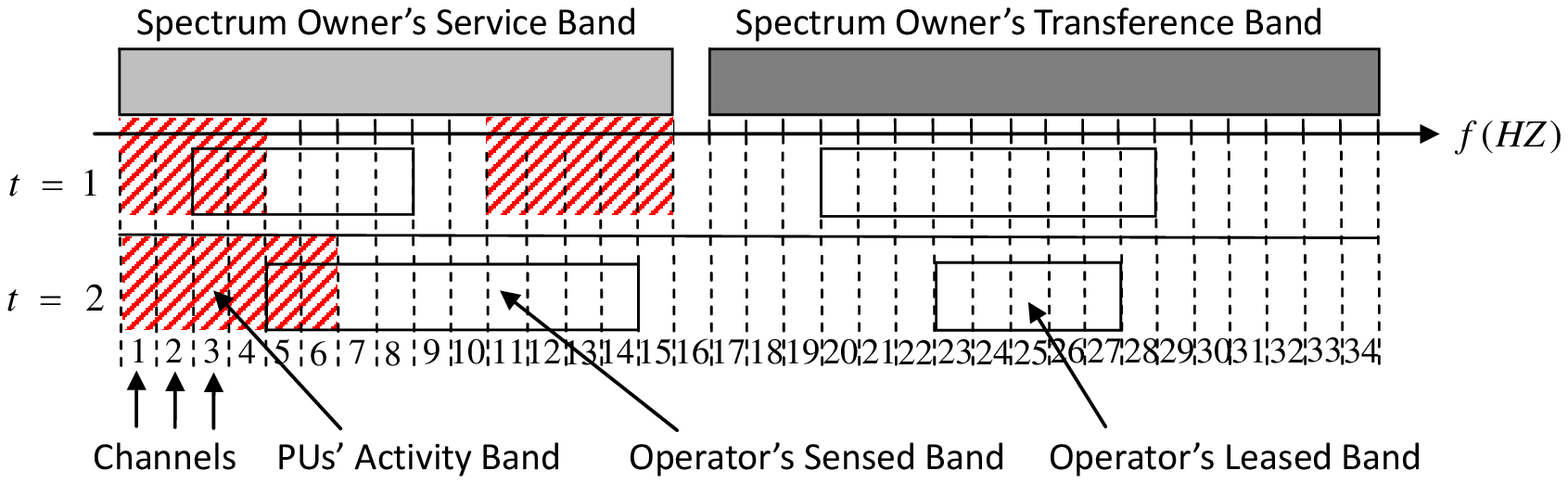}
\caption{Operator's Investment in Spectrum Sensing and Leasing}
\label{fig:spectrum pool}
\end{figure}

\section{Network Model}
\label{sec:NetworkModel}

\subsection{Background on Spectrum Sensing and Leasing}
To illustrate the opportunity and trade-off of spectrum sensing and
leasing, we consider a spectrum owner who divides its licensed
spectrum into two types:

\begin{itemize}
\item \emph{Service Band}: This band is reserved for serving the spectrum
owner's primary users (PUs). Since the PUs' traffic is stochastic,
there will be some unused spectrum which changes dynamically. The
operator can sense and utilize the unused portions. There are no
explicit communications between the spectrum owner and the operator.

\item \emph{Transference Band}: The spectrum owner temporarily does not
use this band. The operator can lease the bandwidth through explicit
communications with the spectrum owner. No sensing is allowed in
this band.
\end{itemize}

Due to the short-term property of both sensing and leasing, the
operator needs to make both the sensing and leasing decisions in
each time slot.

The example in Fig.~1 demonstrates the dynamic opportunities for
spectrum sensing, the uncertainty of sensing outcome, and the impact
of sensing or leasing decisions. The spectrum owner's entire band is
divided into small 34 channels\footnote{Channel 16 is the guard band
between the service and transference bands.}.
\begin{itemize}
\item Time slot 1:  PUs use
channels $1-4$ and $11-15$. The operator is unaware of this and
senses channels $3-8$. As a result, it obtains 4 unused channels
($5-8$). It leases additional 9 channels ($20-28$) from the
transference band.

\item Time slot 2:  PUs change their behavior and use channels $1-6$. The operator
senses channels $5-14$ and obtains 8 unused channels ($7-14$). It
leases additional 5 channels ($23-27$) from the transference band.
\end{itemize}

In this paper, we will only study the operator's decisions within a
single time slot. {We choose the time slot length such that primary
users' activities remain roughly unchanged within a single time
slot. This means that it is enough for the operator to sense at the
beginning of each time slot. For traffic types such as TV programs,
data transfer, and even VoIP voice sessions, the length of the time
slot can be reasonable long. For readers who are interested in the
optimization of the time slot length to balance sensing and data
transmission, see\cite{huang2009optimal}.}

\subsection{Notations and Assumptions}


\begin{table}
\centering \caption{Key Notations}
\begin{tabular}{|c|c|}
\hline Symbol & Physical Meaning
\\\hline \hline $B_s$ & Sensing
bandwidth\\\hline $B_l$ & Leasing bandwidth \\\hline $C_s$ & Unit
sensing cost\\\hline $C_l$ & Unit leasing cost\\\hline
$\alpha\in[0,1]$ & Sensing realization factor\\\hline
$\mathcal{I}=\{1,\cdots,I\}$ & Set of secondary users\\\hline $\pi$
& Unit price \\\hline $w_i$ & User $i$'s bandwidth allocation
\\\hline $r_i$ & User $i$'s data rate \\\hline $P_i^{\max}$ & User $i$'s maximum transmission power \\\hline
$h_i$ & User $i$'s channel gain \\\hline $n_0$ & Noise power density
\\\hline $g_i=P_i^{\max}h_i/n_0$ & User $i$'s wireless
characteristic \\\hline $\mathtt{SNR}_i=g_i/w_i$ & User $i$'s SNR
\\\hline $G=\sum_{i\in\mathcal{I}}g_i$ & Users' aggregate wireless
characteristics \\\hline $R$ & Operator's profit
\\\hline
\end{tabular}
\tabcolsep 5mm \label{tab:notation}
\end{table}

{We consider a cognitive network with one operator and a set
$\mathcal{I}=\{1,\ldots,I\}$ of users. The operator has the
cognitive capability and can sense the unused spectrum. One way to
realize this is to let the operator construct a sensor network that
is dedicated to sensing the radio environment in space and time
\cite{WeissDySPAN2010}. The operator will collect the sensing
information from the sensor network and provide it to the unlicensed
users, or providing ``sensing as service''. If the operator owns
several base stations, then each base station is responsible for
collecting sensing information in a certain geographical area. As
mentioned in \cite{WeissDySPAN2010}, there has been significant
current research efforts in the context of an European project
SENDORA \cite{SENDORA2008}, which aims at developing techniques
based on sensor networks for supporting coexistence of licensed and
unlicensed wireless users in a same area. The users are equipped
with software defined radios and can tune to transmit in a wide
range of frequencies as instructed by the operator, but do not
necessarily have the cognitive sensing capacity. Since the secondary
users do not worry about sensing, they can spend most of their time
and energy on actual data transmissions. Such a network structure
puts most of the implementation complexity at the operator side and
reduces the user equipment complexity, and thus might be easier to
implement in practice than a ``full'' cognitive network. }

The key notations of this paper are listed in Table
\ref{tab:notation} with some explanations as follows.
\begin{itemize}
\item \emph{Investment decisions $B_{s}$ and $B_{l}$}: the operator's sensing and leasing bandwidths, respectively.

\item \emph{Sensing realization factor $\alpha$}: when the operator senses a total bandwidth of $B_s$, only a
proportion of $\alpha\in[0,1]$ is unused and can be used by the
operator. $\alpha$ is a random variable and depends on the primary
users' activities. With perfect sensing results, users can use
bandwidth up to  $B_s\alpha$ without generating interferences to the
primary users.

\item \emph{Cost parameters $C_{s}$ and $C_{l}$}: the operator's fixed sensing and  leasing costs per unit bandwidth, respectively.
Sensing cost $C_{s}$ depends on the operator's sensing technologies.
When the operator senses spectrum, it needs to spend time and energy
on channel sampling and signal processing (\cite{YCLiang}). Sensing
over different channels often needs to be done sequentially due to
the potentially large number of channels open to opportunistic
spectrum access and the limited power/hardware capacity of cognitive
radios (\cite{Shu}). The larger sensing bandwidth and the more
channels, the longer time and  higher energy it requires
(\cite{IEEEhowto:Chen}). For simplicity, we assume that total
sensing cost is linear in the sensing bandwidth $B_s$. Leasing cost
$C_{l}$ is determined through the negotiation between the operator
and the spectrum owner and is assumed to be larger than $C_{s}$.

\item \emph{Pricing decision $\pi$}: the operator's choice of  price per unit bandwidth to the users. \end{itemize}
%

%



\begin{figure}[tt]
\centering
\includegraphics[width=0.22\textwidth,angle=90]{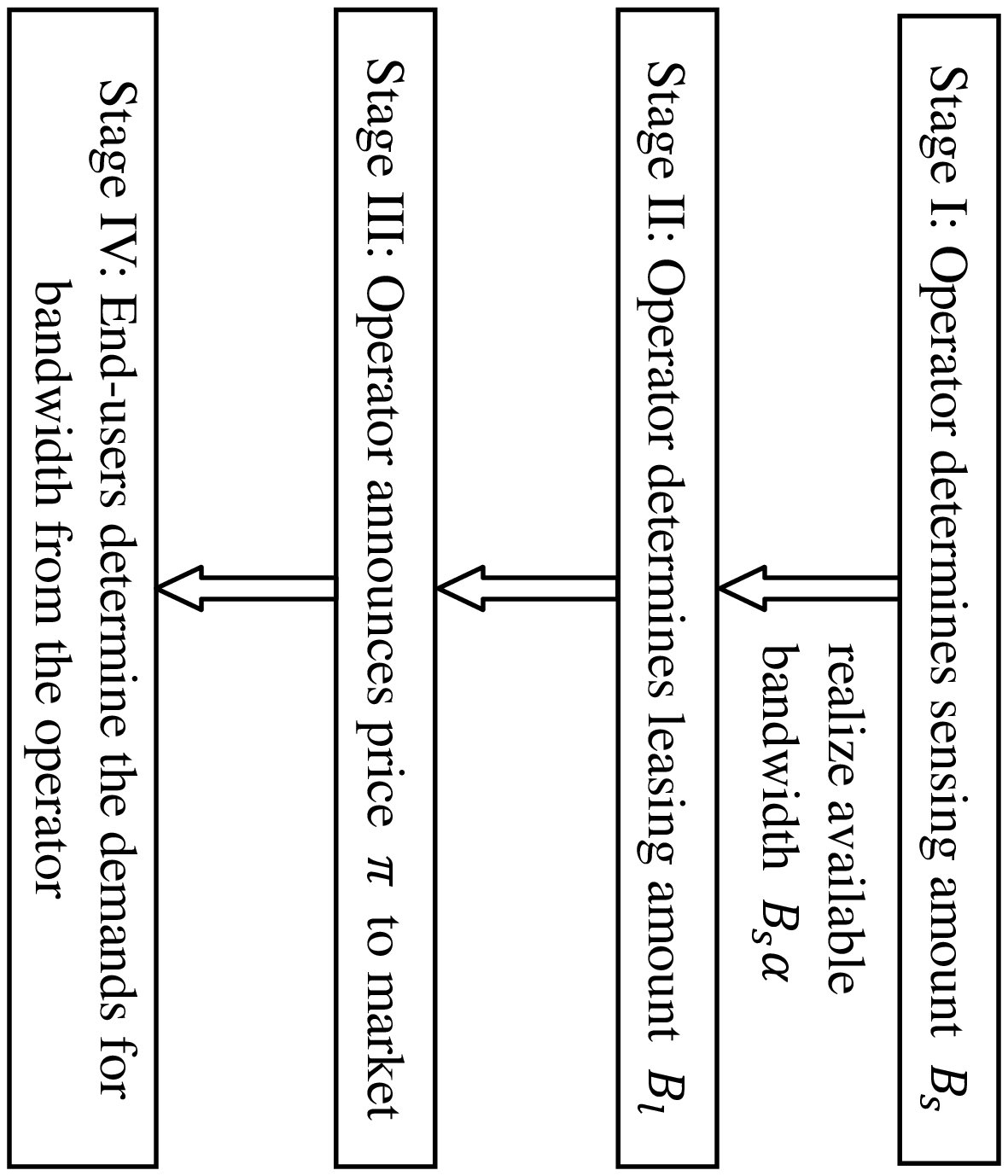}
\caption{A {Stackelberg Game}} \label{fig:fourstage}
\end{figure}

\subsection{A {Stackelberg Game}}

We consider a {Stackelberg Game} between the operator and the users
as shown in Fig.~\ref{fig:fourstage}. The operator is the
Stackelberg leader: it first decides the sensing amount $B_s$ in
Stage I, then decides the leasing amount $B_l$ in Stage II (based on
the sensing result $B_{s}\alpha$), and then announces the price
$\pi$ to the users in Stage III (based on the total supply
$B_s\alpha+B_l$). Finally, the users choose their bandwidth demands
to maximize their individual payoffs in Stage IV.

{We note that ``sensing followed by leasing'' is optimal for the
operator to maximize profit. Since sensing is cheaper than leasing,
the operator should lease only if sensing does not provide enough
resource. If the operator determines sensing and leasing
simultaneously, then it is likely  to ``over-lease'' (compared with
``sensing followed by leasing'') to avoid having too little resource
when $\alpha$ is small. ``Leasing before sensing'' can not improve
the operator's profit either due to the same reason. }

\section{Backward Induction of the Four-stage Game}
\label{sec:BackwardInduction}

The {Stackelberg game} falls into the class of dynamic game, and the
common solution concept is the Subgame Perfect Equilibrium (SPE, or
simply as \emph{equilibrium} in this paper). A general technique for
determining the SPE is the backward induction
(\cite{microeconomic_book}). We will start with Stage IV and analyze
the users' behaviors given the operator's investment and pricing
decisions. Then we will look at Stage III and analyze how the
operator makes the pricing decision given investment decisions and
the possible reactions of the users in Stage IV. Finally we proceed
to derive the operator's optimal leasing decision in Stage II and
then the optimal sensing decision in Stage I. The backward induction
captures the sequential dependence of the decisions in four stages.

\subsection{Spectrum Allocation in Stage IV}\label{subsec:stageIV}
In Stage IV, end-users determine their bandwidth demands given the
unit price $\pi$ announced by the operator in stage III.

\com{
\begin{figure}[htbp]
\centering
\includegraphics[width=0.27\textwidth]{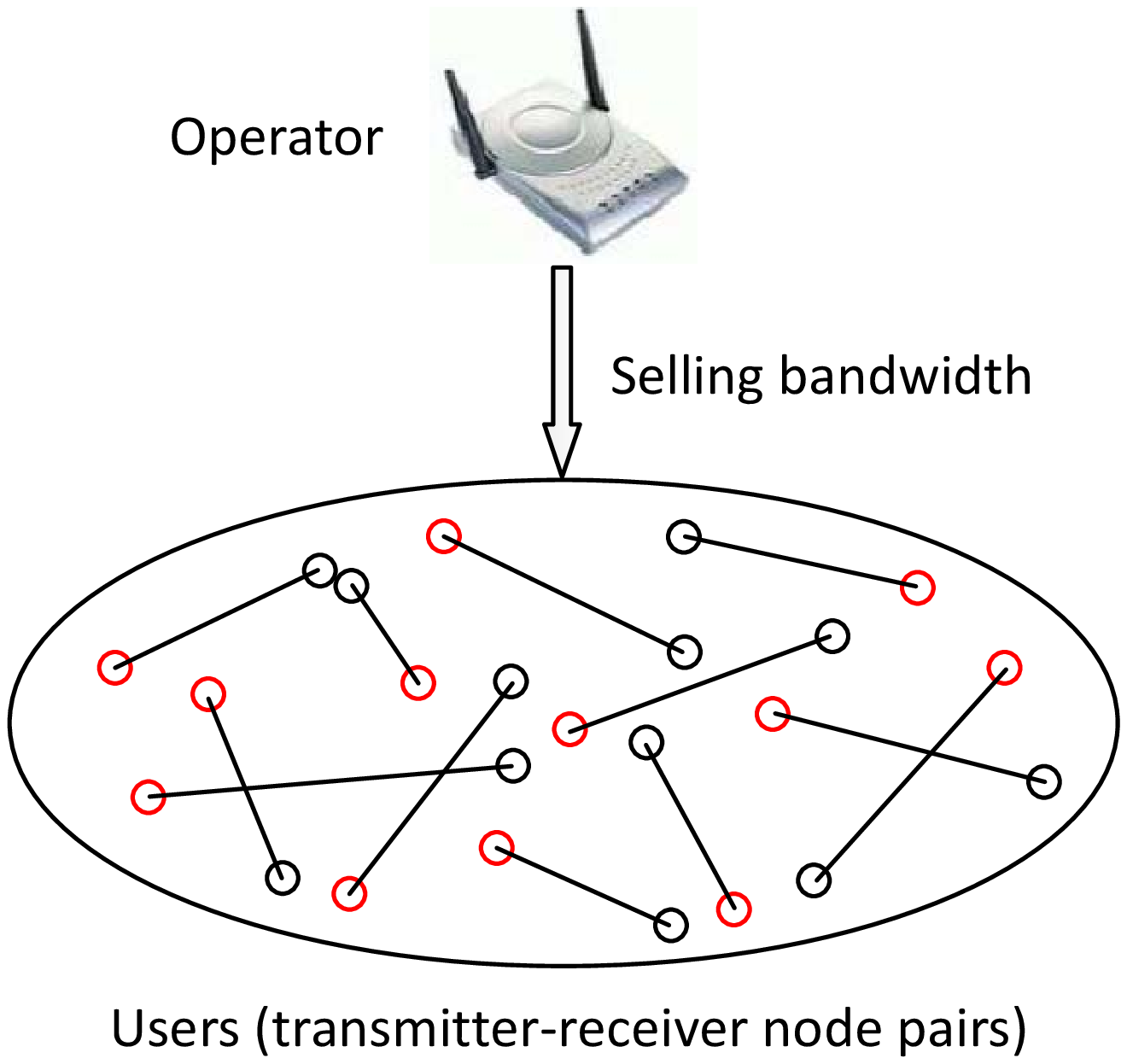}
\caption{Network scenario with secondary users forming an ad hoc
network} \label{fig:user1}
\end{figure}}

\com{
\begin{figure}[htbp]
\centering
\includegraphics[width=0.35\textwidth]{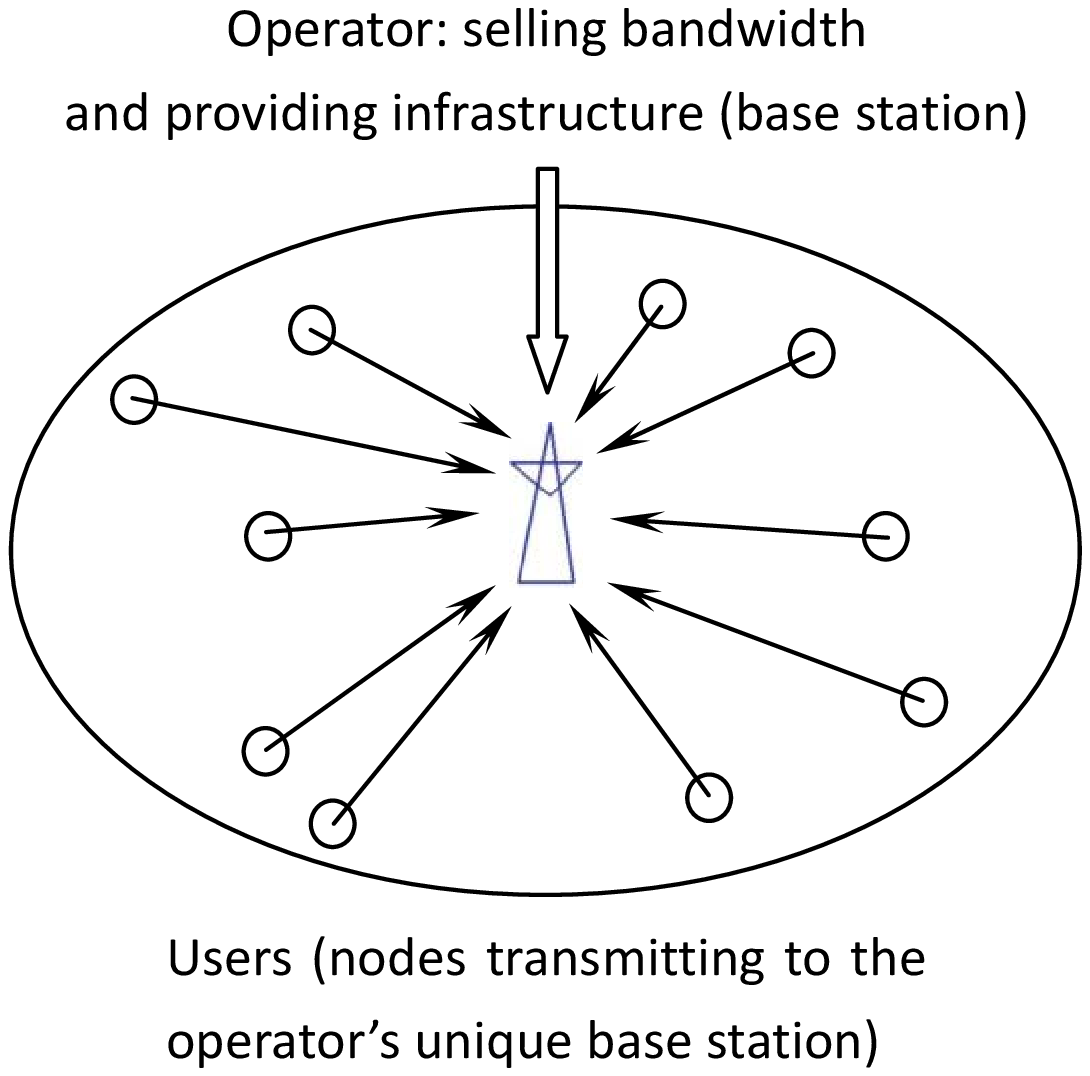}
\caption{Network scenario with secondary users transmitting to the
base station of the cognitive virtual mobile network operator}
\label{fig:user2}
\end{figure}
}

Each user can represent a transmitter-receiver node pair in an ad
hoc network\com{(e.g., Fig.~\ref{fig:user1})}, or a node that
transmits to the operator's base station in an uplink scenario\com{(e.g., Fig.~\ref{fig:user2})}. We assume that users access the
spectrum provided by the operator through FDM (Frequency-division
multiplexing) or OFDM (Orthogonal frequency-division multiplexing)
to avoid mutual interferences. User $i$'s achievable rate (in nats)
is\footnote{{We assume that the operator only provides bandwidth
without restricting the application types. This assumption has been
commonly used in dynamic spectrum sharing literature, e.g.,
\cite{IEEEhowto:Niyato,Simeone,NiyatoTMC2009,IEEEhowto:Jia}.}}:
\begin{equation}\label{eq:rate}
r_i(w_i)= w_i
\ln\left(1+\frac{P_i^{\max}h_i}{n_0w_i}\right),\end{equation}
where $w_i$ is the allocated bandwidth from the operator,
$P_i^{\max}$ is user $i$'s maximum transmission power, $n_0$ is the
noise power per unit bandwidth, $h_i$ is user $i$'s channel gain
(between user $i$'s own transmitter and receiver in an ad hoc
network, or between user $i$'s transmitter to the operator's base
station in an uplink scenario). To obtain rate in (\ref{eq:rate}),
user $i$ spreads its maximum transmission power $P_k^{\max}$ across
the entire allocated bandwidth $w_{i}$. To simplify the notation, we
let $g_i = P_i^{\max}h_i/n_0$, thus $g_i/w_i$ is the user $i$'s
signal-to-noise ratio (SNR). Here we focus on best-effort users who
are interested in maximizing their data rates. Each user only knows
its local information (i.e., $P_i^{\max}$, $h_i$, and $n_0$) and
does not know anything about other users.

{From a user's point of view, it does not matter whether the
bandwidth has been obtained by the operator through spectrum sensing
or dynamic leasing. Each unit of allocated bandwidth is perfectly
reliable for the user.}

To obtain closed-form solutions, we  \emph{first focus on the high
SNR regime where $\mathtt{SNR}\gg1$}. This is motivated by the fact
that users often have limited choices of modulation and coding
schemes, and thus may not be able to decode a transmission if the
SNR is below a threshold.
In the high SNR regime, the rate in (\ref{eq:rate}) can be
approximated as
\begin{equation}\label{eq:highSNR}
r_i(w_i)= w_i \ln\left(\frac{g_i}{w_i}\right).
\end{equation}
Although the analytical solutions in Section
\ref{sec:BackwardInduction} are derived based on (\ref{eq:highSNR}),
we emphasize that \emph{all the major engineering insights remain
true in the general SNR regime.} A formal proof is in Section
\ref{sec:Equilibrium}.

A user $i$'s \emph{payoff} is a function of the allocated bandwidth
$w_i$ and the price $\pi$,
\begin{equation}\label{eq:utility}
u_i(\pi,w_i) = w_i \ln\left(\frac{g_i}{w_i}\right)-\pi w_i,
\end{equation}
i.e., the difference between the data rate and the linear payment
($\pi w_i$). Payoff $u_i(\pi,w_i)$ is concave in $w_i$, and the
unique bandwidth \emph{demand} that maximizes the payoff is
\begin{equation}\label{eq:optbandwidth}
w_{i}^*(\pi)=\arg\max_{w_{i}\geq
0}u_{i}(\pi,w_{i})=g_{i}e^{-(1+\pi)},
\end{equation}
which is always positive, linear in $g_i$, and decreasing in price
$\pi$.  Since $g_i$ is linear in channel gain $h_i$ and transmission
power $P_i^{\max}$, then a user with a better channel condition or a
larger transmission power has a larger demand.


Equation (\ref{eq:optbandwidth}) shows that each user $i$ achieves
the same SNR:
$$\mathtt{SNR}_{i}=\frac{g_i}{w_i^*(\pi)}=e^{(1+\pi)}.$$
but a different payoff that is linear in $g_i$,
$$u_{i}(\pi,w_i^\ast(\pi))=g_i e^{-(1+\pi)}. $$
We denote users' aggregate wireless characteristics as
$G=\sum_{i\in\mathcal{I}}g_i$. The users' total demand is
\begin{equation}\label{eq:demand}
\sum_{i\in \mathcal{I}}w_{i}^{*}(\pi)=Ge^{-(1+\pi)}.
\end{equation}

Next, we will consider how the operator makes the investment
(sensing and leasing) and pricing decisions in Stages I-III based on
the total demand in eq. (\ref{eq:demand})\footnote{We assume that
the operator knows the value of $G$ through proper feedback
mechanism from the users.}. In particular, we will show that the
operator will always choose a price in Stage III such that the total
demand (as a function of price) does not exceed the total supply.

\begin{table*}[tt]
\centering \caption{Optimal Pricing Decision and Profit in Stage
III}
\begin{tabular}{|p{1.6in}|l|p{2.5in}|}
\hline Total Bandwidth Obtained in Stages I and II & Optimal Price
$\pi^*\left(B_{s},\alpha,B_{l}\right)$ & Optimal Profit
$R_{III}(B_s,\alpha,B_l)$\\\hline Excessive Supply Regime:
\;\;\;\;\;\;\;\;\;\;\;$B_{l}+B_{s}\alpha\geq Ge^{-2}$ & $\pi^{ES}=1$
&
$R_{III}^{ES}(B_{s},\alpha,B_l)=Ge^{-2}-B_{s}C_{s}-B_{l}C_{l}$\\\hline
Conservative Supply Regime: $  B_{l}+B_{s}\alpha < Ge^{-2}$ &
$\pi^{CS}=\ln\left(\frac{G}{B_{l}+B_{s}\alpha}\right)-1$ &
$R_{III}^{CS}(B_{s},\alpha,B_l)=(B_l+B_{s}\alpha)\ln\left(\frac{G}{B_{l}+B_{s}\alpha}\right)-B_s(\alpha+C_s)-B_l(1+C_l)$
\\\hline
\end{tabular}
\label{tab:pricing}
\end{table*}

\subsection{Optimal Pricing Strategy in Stage III}
\label{Sect:stageIII} In Stage III, the operator determines the
optimal pricing considering users' total demand (\ref{eq:demand}),
given the bandwidth supply $B_{s}\alpha + B_{l}$ obtained
in Stage II. The operator profit is%
\begin{multline}\label{eq:profit}
R(B_s,\alpha,B_l,\pi) =\min\left(\pi \sum_{i\in
\mathcal{I}}w_{i}^{*}(\pi), \pi\left(B_{l}+B_{s}\alpha\right)\right)
\\-\left(B_{s}C_{s}+B_{l}C_{l}\right),
\end{multline}
which is the difference between the revenue and total cost. The
$\min$ operation denotes the fact that the operator can only satisfy
the demand up to its available supply. The objective of Stage III is
to find the optimal price $\pi^\ast\left(B_{s},\alpha,B_{l}\right)$
that maximizes the profit, that is,
\begin{equation}\label{eq:profitStateIII}
R_{III}(B_s,\alpha,B_l)=\max_{\pi\geq 0}R(B_s,\alpha,B_l,\pi).
\end{equation}
The subscript ``III'' denotes the best profit in Stage III.

\begin{figure}[tt]
\centering
\includegraphics[width=0.3\textwidth]{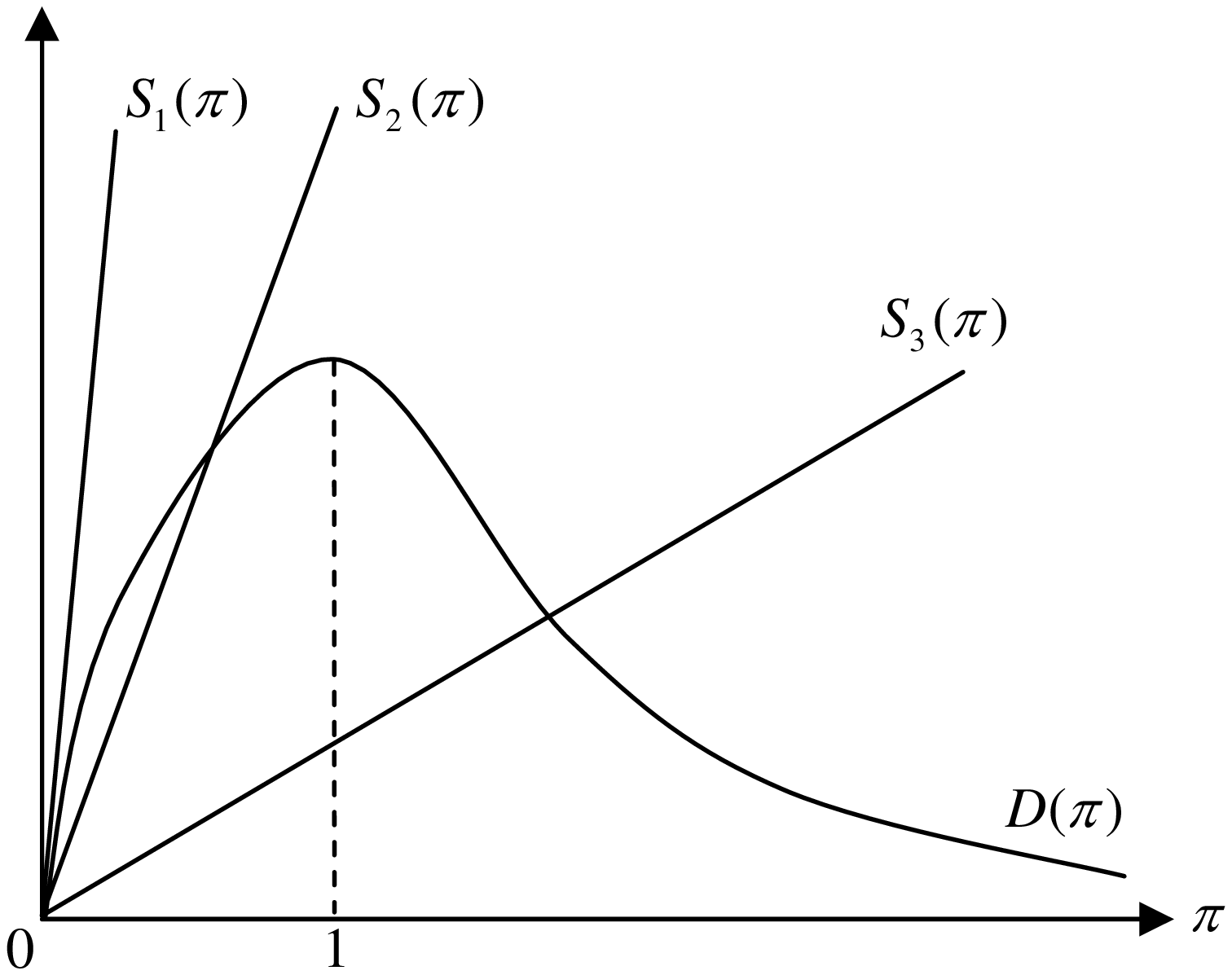}
\caption{Different intersection cases of $D(\pi)$ and $S(\pi)$}
\label{fig:intersection}
\end{figure}

Since the bandwidths $B_s$ and $B_l$ are given in this stage, the
total cost $B_sC_s+B_lC_l$ is already fixed. The only optimization
is to choose the optimal price $\pi$ to maximize the revenue, i.e.,
\begin{equation}\label{eq:maxmin}
\max_{\pi\geq 0} \min\left(\pi \sum_{i\in
\mathcal{I}}w_{i}^{*}(\pi),
\pi\left(B_{l}+B_{s}\alpha\right)\right).
\end{equation}

The solution of problem (\ref{eq:maxmin}) depends on the bandwidth
investment in Stages I and II. Let us define $D(\pi)=\pi \sum_{i \in
\mathcal{I}}w_i^*(\pi)$ and $S(\pi)=\pi(B_l+B_s\alpha)$. Figure
\ref{fig:intersection} shows three possible relationships between
these two terms, where $S_{j}(\pi)$ (for $j=1,2,3$) represents each
of the three possible choices of $S(\pi)$ depending on the bandwidth
$B_l+B_s\alpha$:
\begin{itemize}
\item $S_1(\pi)$ (excessive supply): No intersection with $D(\pi)$;
\item $S_2(\pi)$ (excessive supply): intersect once with $D(\pi)$ where $D(\pi)$ has a non-negative slope;
\item $S_3(\pi)$ (conservative supply): intersect once with $D(\pi)$ where $D(\pi)$ has a negative slope.
\end{itemize}


%
In the \emph{excessive supply} regime, $\max_{\pi\geq
0}\min\left(S(\pi),D(\pi)\right)=\max_{\pi\geq 0} D(\pi)$, i.e., the
max-min solution occurs at the maximum value of $D(\pi)$ with
$\pi^{\ast}=1$. In this regime, the total supply is larger than the
total demand at the best price choice. In the \emph{conservative
supply} regime, the max-min solution occurs at the unique
intersection point of $D(\pi)$ and $S(\pi)$. The above observations
lead to the following result.

\begin{theorem}\label{Thm:pricing}
The optimal pricing decision and the corresponding optimal profit at
Stage III can be characterized by Table \ref{tab:pricing}.
\end{theorem}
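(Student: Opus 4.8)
The plan is to solve the inner revenue-maximization (\ref{eq:maxmin}) by a complete case analysis on the total supply $B_l+B_s\alpha$, noting first that the fixed cost $B_sC_s+B_lC_l$ in (\ref{eq:profit}) plays no role in selecting the optimal price. First I would record the shapes of the two revenue curves. By the closed-form demand (\ref{eq:demand}), the demand-limited revenue is $D(\pi)=\pi Ge^{-(1+\pi)}$; its derivative $D'(\pi)=Ge^{-(1+\pi)}(1-\pi)$ vanishes only at $\pi=1$, so $D$ is unimodal on $[0,\infty)$, increasing on $[0,1]$ and decreasing on $[1,\infty)$, with $D(0)=0$ and unique maximum $D(1)=Ge^{-2}$. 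The supply-limited revenue $S(\pi)=\pi(B_l+B_s\alpha)$ is linear and strictly increasing from the origin.

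Next I would locate the crossing of the two curves and identify the active (smaller) branch of the min. Since both curves pass through the origin, for $\pi>0$ the sign of $D(\pi)-S(\pi)=\pi\left[Ge^{-(1+\pi)}-(B_l+B_s\alpha)\right]$ is controlled by the strictly decreasing factor $Ge^{-(1+\pi)}-(B_l+B_s\alpha)$. Hence any positive crossing is unique and occurs at $\pi^{CS}=\ln\left(G/(B_l+B_s\alpha)\right)-1$, the price at which demand exactly equals supply, with $D>S$ for $\pi<\pi^{CS}$ and $D<S$ for $\pi>\pi^{CS}$. Therefore $\min(D,S)=S$ on $[0,\pi^{CS}]$ and $\min(D,S)=D$ on $[\pi^{CS},\infty)$, which is precisely the geometry sketched in Fig.~\ref{fig:intersection}.

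I would then maximize $\min(D,S)$ by comparing the crossing $\pi^{CS}$ with the peak location $\pi=1$, equivalently comparing $B_l+B_s\alpha$ with $Ge^{-2}$. If $B_l+B_s\alpha\geq Ge^{-2}$ (the excessive-supply regime, absorbing both cases $S_1$ and $S_2$), then either no positive crossing exists or $\pi^{CS}\leq1$, so $D$ attains its global maximum $Ge^{-2}$ at $\pi=1$ inside the region where $\min(D,S)=D$, while $S$ only rises up to $\pi^{CS}\leq1$; this gives $\pi^{ES}=1$ and $R_{III}^{ES}=Ge^{-2}-B_sC_s-B_lC_l$. If instead $B_l+B_s\alpha<Ge^{-2}$ (the conservative-supply regime, case $S_3$), then $\pi^{CS}>1$, so on $[0,\pi^{CS}]$ the active branch $S$ is increasing and on $[\pi^{CS},\infty)$ the active branch $D$ is decreasing; both peak at $\pi^{CS}$, giving optimal price $\pi^{CS}$ with revenue $S(\pi^{CS})=(B_l+B_s\alpha)\ln\left(G/(B_l+B_s\alpha)\right)-(B_l+B_s\alpha)$. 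Subtracting the fixed cost and regrouping yields $R_{III}^{CS}=(B_l+B_s\alpha)\ln\left(G/(B_l+B_s\alpha)\right)-B_s(\alpha+C_s)-B_l(1+C_l)$, as claimed.

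The routine computations are the derivative of $D$ and the algebra that regroups the conservative-supply profit. The only real subtlety, and the step I would treat most carefully, is confirming the single-crossing structure and checking that the two excessive-supply pictures ($S_1$ with no crossing and $S_2$ with a crossing at $\pi^{CS}\leq1$) both collapse to the same maximizer $\pi=1$: one must verify that the peak of $D$ always lies in the half-line where $D$ is the active branch, so that the supply constraint is slack exactly when $B_l+B_s\alpha\geq Ge^{-2}$.
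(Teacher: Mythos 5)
Your proof is correct and follows essentially the same route as the paper's Appendix~\ref{app:thm1}: analyze the unimodal demand-revenue curve $D(\pi)$ and the linear supply-revenue curve $S(\pi)$, locate their unique positive crossing at $\pi=\ln\left(G/(B_l+B_s\alpha)\right)-1$, and maximize $\min(D,S)$ by comparing that crossing with the peak of $D$ at $\pi=1$. The only difference is organizational --- you establish single-crossing by factoring $D(\pi)-S(\pi)$ and split once at $Ge^{-2}$, whereas the paper first tests for existence of an intersection via the slopes at $\pi=0$ (threshold $Ge^{-1}$) and then subdivides --- but the cases, maximizers, and profit expressions you obtain coincide with Table~\ref{tab:pricing}.
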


The proof of Theorem \ref{Thm:pricing} is given in Appendix
\ref{app:thm1}. Note that in the excessive supply regime, some
bandwidth is left unsold (i.e., $S(\pi^{\ast})>D(\pi^{\ast})$). This
is because the acquired bandwidth is too large, and selling all the
bandwidth will lead to a very low price that decreases the revenue
(the product of price and sold bandwidth). The profit can be
apparently improved if the operator acquires less bandwidth in
Stages I and II. Later analysis in Stages II and I will show that
the equilibrium of the game must lie in the conservative supply
regime if the sensing cost is non-negligible.

\subsection{Optimal Leasing Strategy in Stage II}
\label{Sect:stageII}

In Stage II, the operator decides the optimal leasing amount $B_l$
given the sensing result $B_s \alpha$:
\begin{equation}\label{eq:R_II}R_{II}(B_s,\alpha)=\max_{B_l\geq
0}R_{III}(B_s,\alpha,B_l).\end{equation}
We decompose problem (\ref{eq:R_II}) into two subproblems based on
the two supply regimes in Table \ref{tab:pricing},
\begin{enumerate}
\item Choose $B_{l}$ to reach the excessive supply regime in Stage III:
\begin{equation}\label{eq:leasingSub2}
R_{II}^{ES}(B_s,\alpha)=\max_{B_l\geq
\max\left\{Ge^{-2}-B_s\alpha,0\right\}}R_{III}^{ES}(B_s,\alpha,B_l).
\end{equation}
\item Choose $B_{l}$ to reach the conservative supply regime in Stage III:
\begin{equation}\label{eq:leasingSub1}
R_{II}^{CS}(B_s,\alpha)=\max_{0\leq B_l\leq
Ge^{-2}-B_s\alpha}R_{III}^{CS}(B_s,\alpha,B_l),
\end{equation}
\end{enumerate}

\begin{table*}[t]
\centering \caption{Optimal Leasing Decision and Profit in Stage II}
\begin{tabular}{|l|l|l|} \hline Given Sensing Result $B_s\alpha$ After Stage
I &Optimal Leasing Amount $B_l^*$ & Optimal Profit
$R_{II}(B_s,\alpha)$
\\\hline (CS1)  $ B_s\alpha\leq {G}{e^{-(2+C_l)}}$ & $B_l^{CS1}={G}{e^{-(2+C_l)}}-B_s\alpha$ &
$R_{II}^{CS1}(B_s,\alpha)  =G{e^{-(2+C_l)}}+B_{s}(\alpha
C_{l}-C_{s})$
\\\hline (CS2) $ B_s\alpha\in\left({G}{e^{-(2+C_l)}}, Ge^{-2}\right]$ &$B_l^{CS2}=0$ &
$R_{II}^{CS2}(B_s,\alpha)  =B_{s}\alpha
\ln\left(\frac{G}{B_{s}\alpha}\right)-B_{s}(\alpha +C_{s})$
\\\hline (ES3) $B_s\alpha>Ge^{-2}$ &$B_l^{ES3}=0$ &$R_{II}^{ES3}(B_s,\alpha)  =Ge^{-2}-B_{s}C_{s}$\\\hline
\end{tabular}
\label{tab:leasing}
\end{table*}

To solve subproblems (\ref{eq:leasingSub2}) and
(\ref{eq:leasingSub1}), we need to consider the bandwidth obtained
from sensing.

\begin{itemize}

\item \emph{Excessive Supply} ($B_s\alpha>Ge^{-2}$): in this case,
the feasible sets of both subproblems (\ref{eq:leasingSub2}) and
(\ref{eq:leasingSub1}) are empty. In fact, the bandwidth supply is
already in the excessive supply regime as defined in Table II, and
it is optimal not to lease in Stage II.

\item \emph{Conservative Supply} ($B_s\alpha\leq Ge^{-2}$): first,
we can show that the unique optimal solution of subproblem
(\ref{eq:leasingSub2}) is $B_l^\ast = Ge^{-2}-B_s\alpha$. This means
that the optimal objective value of subproblem
(\ref{eq:leasingSub2}) is no larger than that of subproblem
(\ref{eq:leasingSub1}), and thus it is enough to consider subproblem
(\ref{eq:leasingSub1}) in the conservative supply regime only.

\end{itemize}

Base on the above observations and some further analysis, we can
show the following:
\begin{theorem}\label{thm:leasing}
In Stage II, the optimal leasing decision and the corresponding
optimal profit are summarized in Table \ref{tab:leasing}.
\end{theorem}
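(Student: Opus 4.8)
The plan is to solve the two subproblems (\ref{eq:leasingSub2}) and (\ref{eq:leasingSub1}) explicitly, and then stitch their solutions together according to which feasible set is nonempty for a given sensing outcome $B_s\alpha$. The excerpt has already disposed of the excessive-supply case $B_s\alpha > Ge^{-2}$: there both feasible sets are empty except for the boundary, and it is optimal not to lease, giving row (ES3) directly from $R_{III}^{ES}(B_s,\alpha,0)$. So the real work is confined to the conservative-supply case $B_s\alpha \le Ge^{-2}$, where by the remark preceding the theorem it suffices to maximize $R_{III}^{CS}(B_s,\alpha,B_l)$ over $0 \le B_l \le Ge^{-2}-B_s\alpha$.

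The key step is a one-dimensional unconstrained optimization of the smooth objective $R_{III}^{CS}$ from Table \ref{tab:pricing}. Writing $t = B_l + B_s\alpha$ for the total conservative supply, the profit is $t\ln(G/t) - B_s(\alpha+C_s) - B_l(1+C_l)$. First I would differentiate with respect to $B_l$ (noting $\partial t/\partial B_l = 1$), obtaining
\begin{equation}
\frac{\partial R_{III}^{CS}}{\partial B_l} = \ln\!\left(\frac{G}{B_l+B_s\alpha}\right) - 1 - (1+C_l) = \ln\!\left(\frac{G}{B_l+B_s\alpha}\right) - (2+C_l).
\end{equation}
Setting this to zero yields the unconstrained stationary point $B_l + B_s\alpha = Ge^{-(2+C_l)}$, i.e. the total supply the operator would like to reach is exactly $Ge^{-(2+C_l)}$. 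Since the objective is concave in $B_l$ (the logarithmic term makes the second derivative $-1/(B_l+B_s\alpha) < 0$), this interior critical point is the global maximizer of the \emph{relaxed} problem. I would then project this target onto the feasible interval $[0, Ge^{-2}-B_s\alpha]$ via the standard concave-maximization argument.

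The projection produces exactly the two conservative rows. If $B_s\alpha \le Ge^{-(2+C_l)}$, the desired total supply $Ge^{-(2+C_l)}$ is achievable with nonnegative leasing, so $B_l^{CS1} = Ge^{-(2+C_l)} - B_s\alpha \ge 0$ (this is feasible since $Ge^{-(2+C_l)} < Ge^{-2}$ because $C_l > 0$); substituting back gives $R_{II}^{CS1} = Ge^{-(2+C_l)} + B_s(\alpha C_l - C_s)$, which is row (CS1). If instead $Ge^{-(2+C_l)} < B_s\alpha \le Ge^{-2}$, the unconstrained optimum calls for negative leasing, so by concavity the constrained maximum is at the boundary $B_l^{CS2} = 0$; evaluating $R_{III}^{CS}$ at $B_l = 0$ gives row (CS2). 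The main obstacle is bookkeeping rather than conceptual: I must verify that the three regimes on $B_s\alpha$ partition $[0,\infty)$ consistently, confirm the inequality $Ge^{-(2+C_l)} < Ge^{-2}$ that guarantees feasibility in (CS1), and check the boundary value at $B_s\alpha = Ge^{-2}$ to ensure the conservative branch (CS2) and the excessive branch (ES3) agree there, so the profit $R_{II}(B_s,\alpha)$ is continuous across the thresholds.
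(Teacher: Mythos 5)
Your proposal is correct and follows essentially the same route as the paper's own proof: dispose of the excessive-supply case by monotonicity, reduce the conservative-supply case to subproblem (\ref{eq:leasingSub1}) via the boundary coincidence of the two profit expressions, and then use concavity of $R_{III}^{CS}$ in $B_l$ together with the first-order condition $\ln\left(\frac{G}{B_l+B_s\alpha}\right)-2-C_l=0$ to obtain $B_l^*=\max\left(Ge^{-(2+C_l)}-B_s\alpha,0\right)$, which yields rows (CS1) and (CS2) upon substitution. The derivative computation and the resulting threshold match the paper exactly, so no further changes are needed.
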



The proof of Theorem \ref{thm:leasing} is given in Appendix
\ref{app:thm2}. Table \ref{tab:leasing} contains three cases based
on the value of $B_{s}\alpha$:  (CS1), (CS2), and (ES3). The first
two cases involve solving the subproblem (\ref{eq:leasingSub1}) in
the conservative supply regime, and the last one corresponds to the
excessive supply regime. Although the decisions in cases (CS2) and
(ES3) are the same (i.e., zero leasing amount), we still treat them
separately since the profit expressions are different.

It is clear that we have an optimal \emph{threshold} leasing policy
here: the operator wants to achieve a total bandwidth equal to
$Ge^{-(2+C_{l})}$ whenever possible. When the bandwidth obtained
through sensing is not enough, the operator will lease additional
bandwidth to reach the threshold; otherwise the operator will not
lease.

\subsection{Optimal Sensing Strategy in Stage I}\label{subsec:stageI}
In Stage I, the operator will decide the optimal sensing amount to
maximize its expected profit by taking the uncertainty of the
sensing realization factor $\alpha$ into account. The operator needs
to solve the following problem $$R_{I} =  \max_{B_{s \geq 0}}
R_{II}\left(B_{s}\right),$$ where $R_{II}\left(B_{s}\right)$ is
obtained by taking the expectation of $\alpha$ over the profit
functions in Stage II (i.e., $R_{II}^{CS1}(B_{s},\alpha)$,
$R_{II}^{CS2}(B_{s},\alpha)$, and $R_{II}^{ES3}(B_{s},\alpha)$ in
Table \ref{tab:leasing}).


\com{
\begin{figure}[tt]
\centering
\includegraphics[width=0.3\textwidth]{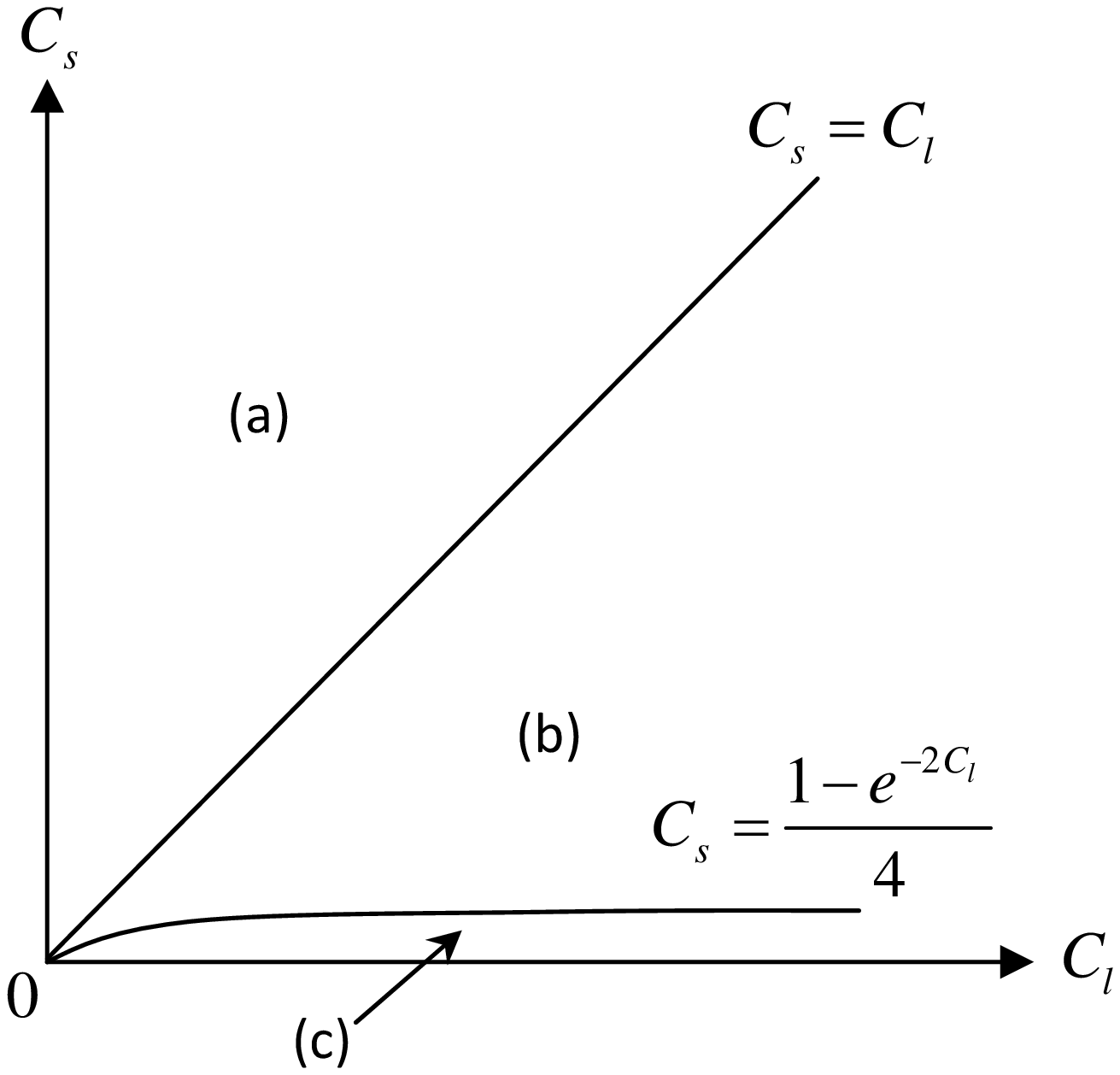}
\caption{The feasible region of the sensing and leasing costs
$(C_{s},C_{l})$, divided into three sub-regions depending on the
relationship between the two costs. } \label{fig:lowerboundCs}
\end{figure}
}

To obtain closed-form solutions, we assume that the sensing
realization factor $\alpha$ follows a uniform distribution in
$[0,1]$. In Section \ref{subsec:robustness}, we prove that \emph{the
major engineering insights also hold under any general
distribution.}
%

To avoid the trivial case where sensing is so cheap that it is
optimal to sense a huge amount of bandwidth, we further assume that
the sensing cost is non-negligible and is lower bounded by $C_s \geq
(1-e^{-2C_l})/4$.
%
\com{As shown in Fig.~\ref{fig:lowerboundCs}, we will ignore region
(c) which is very small compared with the entire feasible region
((a), (b), and (c)).}

To derive function $R_{II}\left(B_{s}\right)$, we will consider the
following three intervals:
%
%

\begin{table*}
\centering \caption{Choice of Optimal Sensing Amount in Stage I}
\begin{tabular}{|p{3in}|p{1.7in}|p{1.5in}|}
\hline \multicolumn{1}{|l|}{} &\multicolumn{1}{|l|}{Optimal Sensing
Decision $B_s^*$} &\multicolumn{1}{|c|}{Expected Profit $R_{I}$}
\\\hline High Sensing Cost Regime: $C_s\geq C_l/2$
&$B_s^*=0$ &$R_I^H={G}{e^{-(2+C_l)}}$
\\\hline Low Sensing Cost Regime: $ C_s \in\left[(1-e^{-2C_l})/4,
C_l/2\right]$ &$B_s^*=B_{s}^{L*}$, solution to eq.\
(\ref{eq:LowSensingAmount}) &$R_{I}^{L}$ in eq.
(\ref{eq:ProfitLowSensing})
\\\hline
\end{tabular}
\tabcolsep 5mm \label{tab:sensing}
\end{table*}

\begin{enumerate}
\item Case I: $B_s\in[0, G{e^{-(2+C_l)}}]$. In this case, we always have
$B_s\alpha\leq G{e^{-(2+C_l)}}$ for any value $\alpha\in[0,1]$,
which corresponds to case (CS1) in Table \ref{tab:leasing}. The
expected profit is
\begin{align}
R_{II}^{1}(B_{s})=&E_{\alpha\in[0,1]}\left[R_{II}^{CS1}(B_{s},\alpha)\right]\nonumber\\=&G{e^{-(2+C_l)}}+B_{s}\left(\frac{C_l}{2}-C_s\right),\nonumber
\end{align}
which is a linear function of $B_s$. If $C_s> C_l/2$,
$R_{II}^1(B_s)$ is linearly decreasing in $B_s$; if $C_s <C_l/2$,
$R_{II}^1(B_s)$ is linearly increasing in $B_s$.
\item Case II: $
B_s\in\left(G{e^{-(2+C_l)}}, G{e^{-2}}\right]$. Depending on the
value of $\alpha$, $B_{s}\alpha$ can be in either case (CS1) or case
(CS2) in Table \ref{tab:leasing}.
%
%
The expected profit is
\begin{align}
R_{II}^{2}(B_{s})=&
E_{\alpha\in\left[0,\frac{G{e^{-(2+C_l)}}}{B_s}\right]}\left[R_{II}^{CS1}(B_{s},\alpha)
\right]\nonumber\\
&+E_{\alpha\in\left[\frac{G{e^{-(2+C_l)}}}{B_s},1\right]}\left[R_{II}^{CS2}(B_{s},\alpha)
\right]\nonumber
\\=&\frac{B_s}{2} \ln \left(\frac{G}{B_s}\right)-\frac{B_s}{4}+\frac{B_s}{4}\left(\frac{G
e^{-(2+C_l)}}{B_s}\right)^2-B_{s}C_s.\nonumber
\end{align} $R_{II}^2(B_s)$ is a strictly concave function of $B_s$ since its second-order derivative  $$\frac{\partial^2 R_{II}^2(B_s)}{\partial B_s^2}=\frac{1}{2B_s}\left[
\left(\frac{Ge^{-(2+C_l)}}{B_s}\right)^2-1\right]<0$$ as
$B_s>Ge^{-(2+C_l)}$ in this case.

\begin{figure}[tt]
\centering
\includegraphics[width=0.35\textwidth]{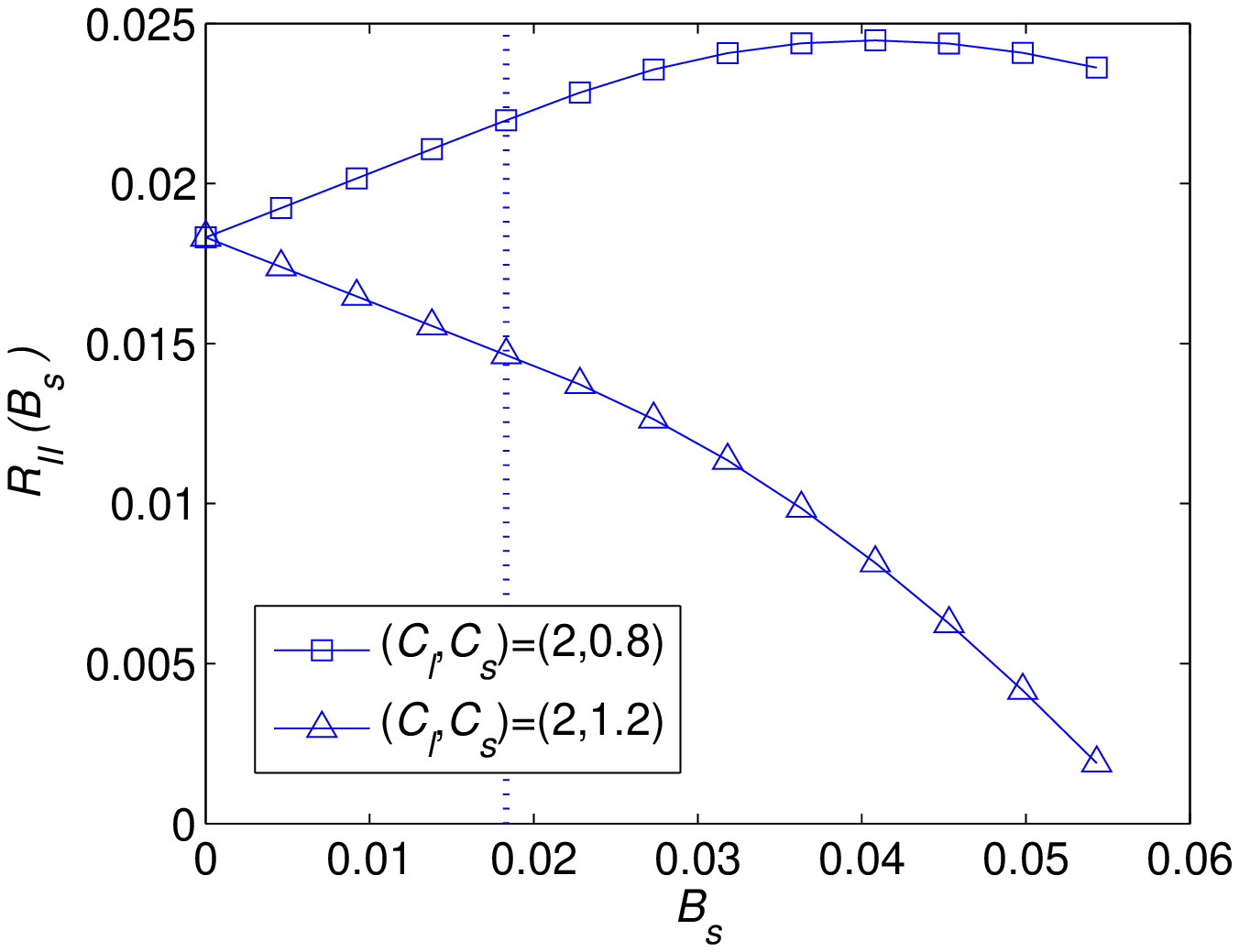}
\caption{Expected profit in Stage II under different sensing and
leasing costs} \label{fig:sensing}
\end{figure}

\item Case III: $B_s\in\left(G{e^{-2}},\infty\right)$. Depending on the value of $\alpha$, $B_{s}\alpha$
can be any of the three cases in Table \ref{tab:leasing}.
%
%
The expected profit is
\begin{align}
R_{II}^{3}(B_s)=&
E_{\alpha\in\left[0,\frac{G{e^{-(2+C_l)}}}{B_s}\right]}\left[R_{II}^{CS1}(B_{s},\alpha)
\right]\nonumber\\&+E_{\alpha\in\left[\frac{G{e^{-(2+C_l)}}}{B_s},\frac{G{e^{-2}}}{B_s}\right]}\left[R_{II}^{CS2}(B_{s},\alpha)
\right]\nonumber\\&+E_{\alpha\in\left[\frac{G{e^{-2}}}{B_s},1\right]}\left[R_{II}^{ES3}(B_{s},\alpha)
\right]\nonumber\\=& \left(\frac{G}{e^2}\right)^2
\frac{e^{-2C_l}-1}{4B_s}-B_sC_s+\frac{G}{e^2}.\nonumber\end{align}
Because its first-order derivative
$$\frac{\partial R_{II}^3(B_s)}{\partial B_s}=\left(\frac{Ge^{-2}}{B_s}\right)^2\frac{1-e^{-2C_l}}{4}-C_s<0,$$
as $B_s>Ge^{-2}$ in this case, $R_{II}^3(B_s)$ is decreasing in
$B_s$ and achieves its maximum at $B_s=Ge^{-2}$.
\end{enumerate}

To summarize, the operator needs to maximize
\begin{equation}\label{eq:sensingdomain} R_{II}(B_s)=
\begin{cases}
R_{II}^{1}(B_{s}), & \text{if } 0\leq B_{s}\leq Ge^{-(2+C_l)}; \\
R_{II}^{2}(B_{s}), & \text{if } Ge^{-(2+C_l)}<B_{s}\leq
Ge^{-2};\\
R_{II}^{3}(B_{s}), & \text{if } B_{s}> Ge^{-2}. \\
\end{cases}
\end{equation}

We can verify that Case II always achieves a higher optimal profit
than Case III. This means that the optimal sensing will only lead to
either case (CS1) or case (CS2) in Stage II, which corresponds to
the conservative supply regime in Stage III. This confirms our
previous intuition that equilibrium is always in the conservative
supply regime under a non-negligible sensing cost, since some
resource is wasted in the excessive supply regime (see discussions
in Section \ref{Sect:stageIII}).

Table \ref{tab:sensing} shows that the sensing decision is made in
the following two cost regimes:
\begin{itemize}
\item \emph{High sensing cost regime} ($C_{s}>C_{l}/2$): it is optimal not to sense.
Intuitively, the coefficient $1/2$ is due to the uniform
distribution assumption of $\alpha$, i.e., on average obtaining one
unit of available bandwidth through sensing costs $2C_{s}$.
\item \emph{Low sensing cost regime} ($C_s\in\left[\frac{1-e^{-2C_l}}{4},\frac{C_l}{2}\right]$):
the optimal sensing amount $B_{s}^{L*}$ is the unique solution to
the following equation:
\begin{equation}\label{eq:LowSensingAmount}
\frac{\partial R_{II}^2(B_s)}{\partial
B_s}=\frac{1}{2}\ln\left(\frac{1}{B_s/G}\right)-\frac{3}{4}-C_s-\left(\frac{e^{-(2+C_l)}}{2B_s/G
}\right)^2=0.
\end{equation}%
The uniqueness of the solution is due to the strict concavity of
$R_{II}^2(B_s)$ over $B_s$. We can further show that $B_{s}^{L*}$
lies in the interval of $\left[{G}{e^{-(2+C_l)}},Ge^{-2}\right]$ and
is linear in $G$.
Finally, the operator's optimal expected profit is
\begin{equation}\label{eq:ProfitLowSensing}
R_{I}^{L}=\frac{B_{s}^{L*}}{2} \ln
\left(\frac{G}{B_{s}^{L*}}\right)-\frac{B_{s}^{L*}}{4}
+\frac{1}{4B_{s}^{L*}}\left(\frac{G}{
e^{2+C_l}}\right)^2-B_{s}^{L*}C_s.
\end{equation}
\end{itemize}

Based on these observations, we can show the following:

\begin{theorem}
In Stage I, the optimal sensing decision and the corresponding
optimal profit are summarized in Table \ref{tab:sensing}. The
optimal sensing amount $B_l^*$ is linear in $G$.
\end{theorem}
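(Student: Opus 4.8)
The plan is to maximize the continuous piecewise objective $R_{II}(B_s)$ in (\ref{eq:sensingdomain}) over $B_s\in[0,\infty)$, exploiting the shape of each piece already established above: $R_{II}^1$ is affine with slope $C_l/2-C_s$, $R_{II}^2$ is strictly concave, and $R_{II}^3$ is strictly decreasing. First I would check that the three pieces glue together continuously (in fact $C^1$) at the breakpoints $Ge^{-(2+C_l)}$ and $Ge^{-2}$, by evaluating the adjacent pieces and their one-sided derivatives there; this licenses a clean piece-by-piece maximization. Since $R_{II}^3$ is decreasing on $(Ge^{-2},\infty)$, its supremum is attained at the left endpoint $Ge^{-2}$ and, by continuity, equals $R_{II}^2(Ge^{-2})$. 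Hence Case III can never strictly beat Case II, and the search reduces to the compact interval $[0,Ge^{-2}]$, confirming the claim made earlier that the equilibrium stays in the conservative supply regime.

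The crux is pinning down the strictly concave piece $R_{II}^2$ through the sign of its derivative at the two endpoints of its domain. Substituting $B_s=Ge^{-(2+C_l)}$ and $B_s=Ge^{-2}$ into the derivative in (\ref{eq:LowSensingAmount}) yields the two clean quantities $C_l/2-C_s$ and $(1-e^{-2C_l})/4-C_s$, respectively (the first also equals the slope of $R_{II}^1$, which is what makes the gluing $C^1$). Because strict concavity means the derivative of $R_{II}^2$ is strictly decreasing, these two endpoint values control the entire outcome. In the high sensing cost regime $C_s\ge C_l/2$, the derivative of $R_{II}^2$ is already $\le 0$ at the left endpoint, so $R_{II}^2$ is strictly decreasing throughout Case II, while $R_{II}^1$ has slope $C_l/2-C_s\le 0$ and is decreasing on Case I; the maximizer on $[0,Ge^{-2}]$ is therefore $B_s^*=0$ with $R_I^H=R_{II}^1(0)=Ge^{-(2+C_l)}$, matching the first row of Table \ref{tab:sensing}. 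In the low sensing cost regime $C_s\in[(1-e^{-2C_l})/4,\,C_l/2]$, the derivative of $R_{II}^2$ is $\ge 0$ at the left endpoint and $\le 0$ at the right endpoint, so by the intermediate value theorem and strict concavity there is a unique interior stationary point $B_s^{L*}\in[Ge^{-(2+C_l)},Ge^{-2}]$ solving (\ref{eq:LowSensingAmount}); since $R_{II}^1$ is increasing on Case I, its best value sits at the shared breakpoint and is dominated by $R_{II}^2(B_s^{L*})$, so $B_s^*=B_s^{L*}$ with profit $R_I^L$ given by (\ref{eq:ProfitLowSensing}), matching the second row.

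Linearity in $G$ follows from a scaling observation: rewriting (\ref{eq:LowSensingAmount}) in terms of $x=B_s/G$ shows that every occurrence of $B_s$ enters only through the ratio $B_s/G$, so the stationarity condition becomes an equation in $x$ alone whose unique root $x^*$ depends only on $C_s$ and $C_l$. Therefore $B_s^{L*}=x^*G$ (and trivially $B_s^*=0$ in the high-cost regime) is linear in $G$, as claimed.

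The main obstacle is not any single hard estimate but the bookkeeping of the piecewise maximization: correctly showing that the interior stationary point of $R_{II}^2$ dominates both the boundary contribution of the affine piece $R_{II}^1$ and the discarded piece $R_{II}^3$. The two endpoint-derivative evaluations are the engine that makes the cost-regime boundaries $C_s=C_l/2$ and $C_s=(1-e^{-2C_l})/4$ emerge, so getting those two algebraic simplifications right, together with verifying the $C^1$ gluing at $Ge^{-(2+C_l)}$, is where the real care is needed.
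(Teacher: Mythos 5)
Your proposal is correct and follows essentially the same route as the paper: the paper's own argument (given in the main text of Section III-D rather than an appendix) is exactly this piecewise maximization of $R_{II}(B_s)$ over the three intervals, eliminating Case III because $R_{II}^3$ is decreasing, and splitting into the two cost regimes via the sign of the derivative of the concave piece $R_{II}^2$ at its endpoints, which yields the thresholds $C_l/2$ and $(1-e^{-2C_l})/4$ and the scaling argument $B_s^{L*}=x^*G$. Your explicit verification of the $C^1$ gluing at $Ge^{-(2+C_l)}$ and the endpoint-derivative computations make precise what the paper leaves implicit, but the substance is identical.
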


\begin{table*}
\centering \caption{The Operator's and Users' Equilibrium Behaviors}
\begin{tabular}{|c|c|c|c|} \hline \multicolumn{1}{|c|}{Sensing Cost Regimes}
&\multicolumn{1}{|c|}{High Sensing Cost: $C_s\geq \frac{C_l}{2}$}
&\multicolumn{2}{|c|}{Low Sensing Cost: $\frac{1-e^{-2C_l}}{4}\leq
C_s\leq \frac{C_l}{2}$}
\\\hline \multicolumn{1}{|c|}{Optimal Sensing
Amount $B_s^*$} &\multicolumn{1}{|c|}{$0$}
&\multicolumn{2}{|c|}{$B_{s}^{L*}\in\left[Ge^{-(2+C_l)},Ge^{-2}\right]$,
solution to eq. (\ref{eq:LowSensingAmount})} \\\hline {Sensing
Realization Factor $\alpha$} &$0\leq\alpha\leq1$ &$0\leq\alpha\leq
Ge^{-(2+C_l)}/B_{s}^{L*}$ &$\alpha> Ge^{-(2+C_l)}/{B_{s}^{L*}}$
\\\hline Optimal Leasing Amount $B_l^*$&$G{e^{-(2+C_l)}}$ &{$G{e^{-(2+C_l)}}-B_{s}^{L*}\alpha$} &{$0$}
\\\hline Optimal Pricing $\pi^{*}$ &$1+C_l$ &$1+C_l$
&$\ln\left(\frac{G}{B_{s}^{L*}\alpha}\right)-1$\\\hline {Expected
Profit $R_I$} &\multicolumn{1}{|c|}{$R_{I}^{H}=G{e^{-(2+C_l)}}$}
 &\multicolumn{1}{|c|}{$R_I^{L}$ in
eq. (\ref{eq:ProfitLowSensing})} &\multicolumn{1}{|c|}{$R_I^{L}$ in
eq. (\ref{eq:ProfitLowSensing})}
\\\hline
User \emph{i}'s SNR &$e^{(2+C_l)}$ &$e^{(2+C_l)}$
&$\frac{G}{B_s^{L*}\alpha}$
\\\hline
User \emph{i}'s Payoff &$g_i e^{-(2+C_l)}$ &$g_i e^{-(2+C_l)}$ &$g_i
(B_s^{L*}\alpha/G)$
\\\hline
\end{tabular}
\tabcolsep 5mm \label{tab:equilibrium}
\end{table*}

Figure~\ref{fig:sensing} shows two possible cases for the function
$R_{II}(B_s)$. The vertical dashed line represents
$B_s=e^{-(2+C_l)}$. For illustration purpose, we assume $G=1$,
$C_{l}=2$, and $C_{s}=\{0.8, 1.2\}$. When the sensing cost is large
(i.e., $C_{s}=1.2>C_{l}/2$), $R_{II}(B_s)$ achieves its optimum at
$B_{s}=0$ and thus it is optimal not to sense. When the sensing cost
is small (i.e., $C_{s}=0.8<C_{l}/2$), $R_{II}(B_s)$ achieves its
optimum at $B_{s}>e^{-(2+C_l)}$ and it is optimal to sense a
positive amount of spectrum.

\section{Equilibrium Summary and Numerical Results}
\label{sec:Equilibrium}

\begin{figure*}
   \begin{minipage}[t]{0.32\linewidth}
      \centering
      \includegraphics[width=1.1\textwidth]{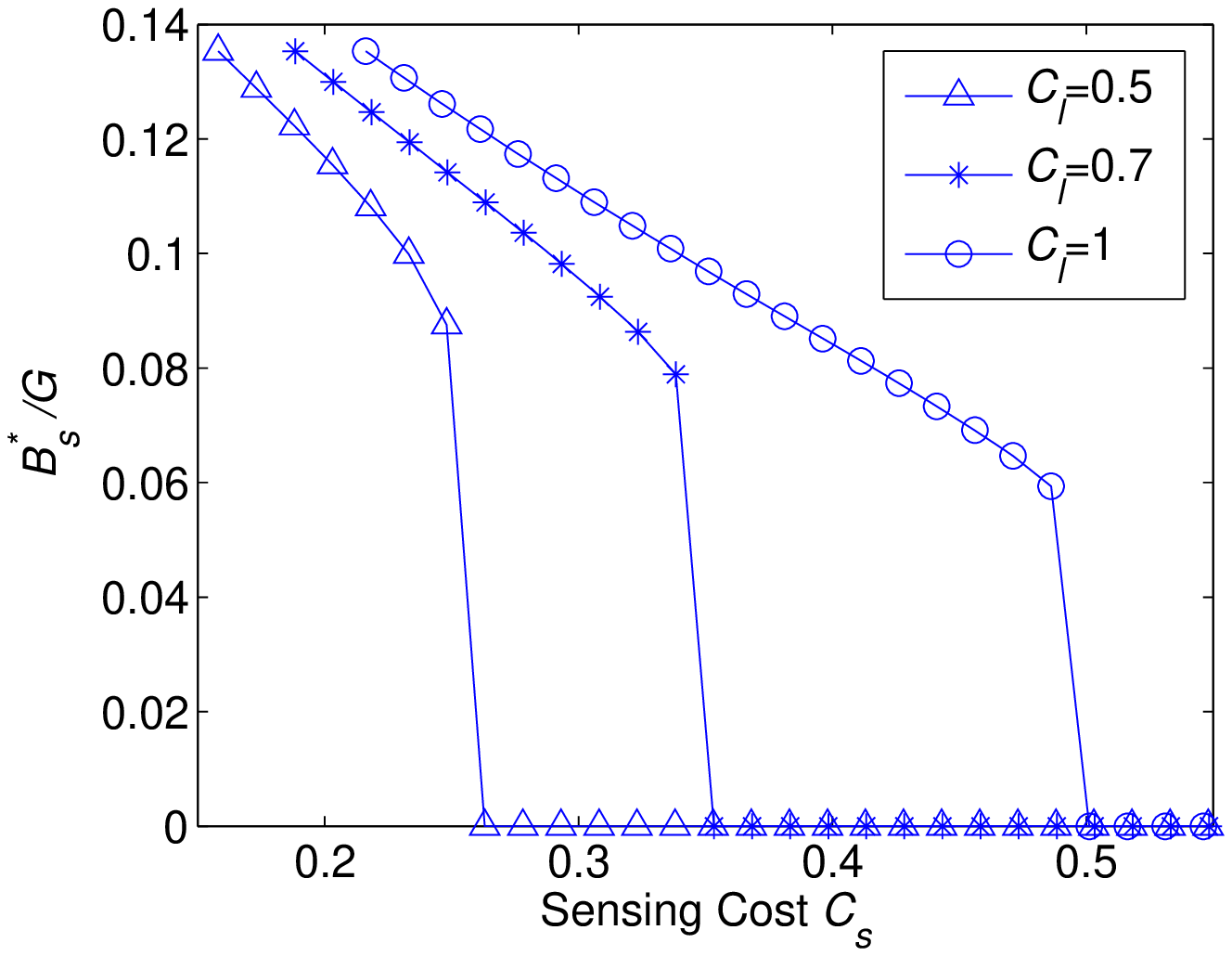}
      \caption{Optimal sensing amount $B_s^*$ as a function of \com{sensing
cost }$C_{s}$ and \com{leasing cost }$C_{l}$.} \label{fig_sim1}
   \end{minipage}
   \begin{minipage}[t]{0.32\linewidth}
      \centering
      \includegraphics[width=1.1\textwidth]{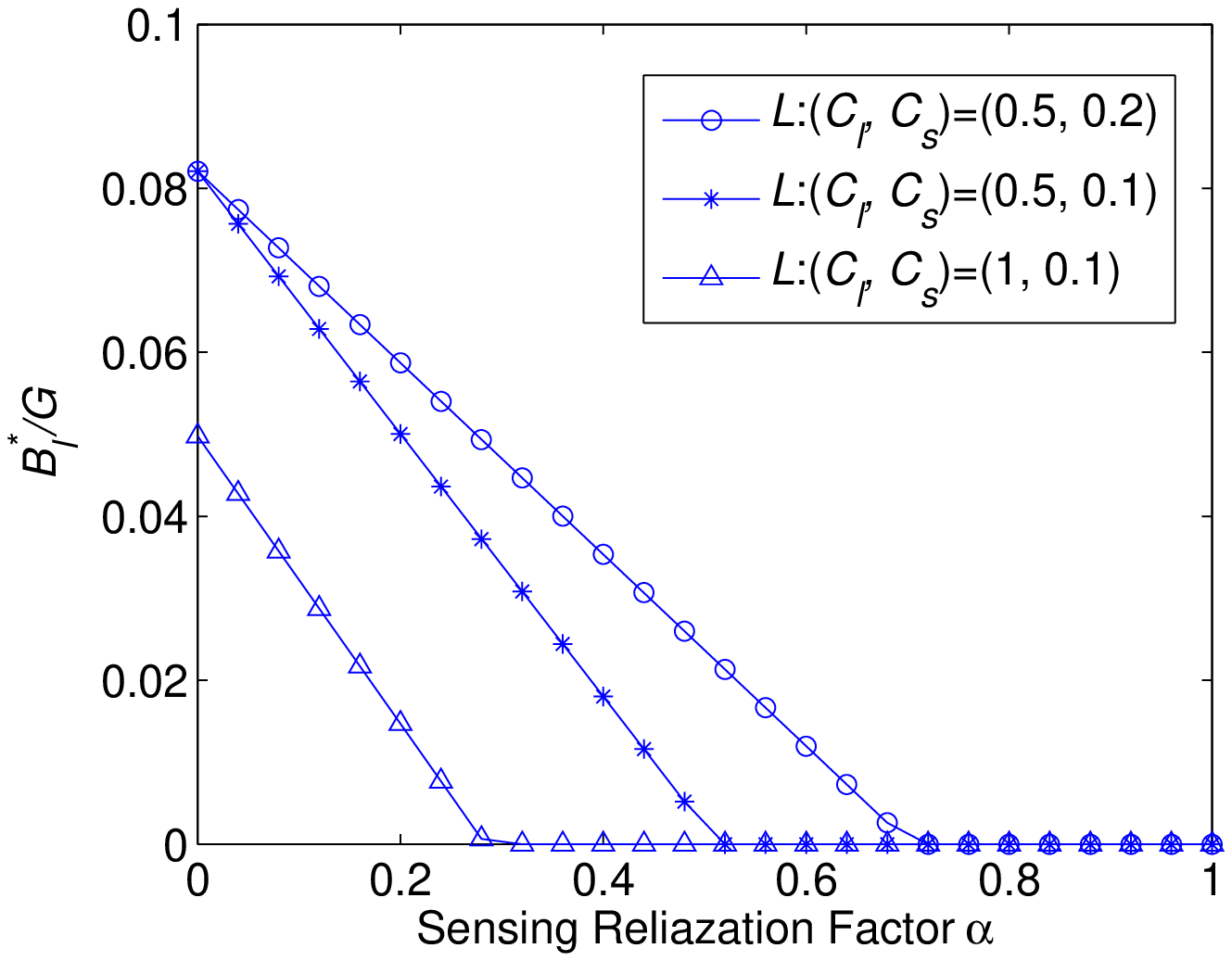}
      \caption{Optimal leasing amount $B_l^*$ as a function of\com{ sensing
cost} $C_{s}$, \com{leasing cost }$C_{l}$, and\com{ sensing
realization factor} $\alpha$.} \label{fig_sim2}
   \end{minipage}
   \begin{minipage}[t]{0.32\linewidth}
      \centering
      \includegraphics[width=1.1\textwidth]{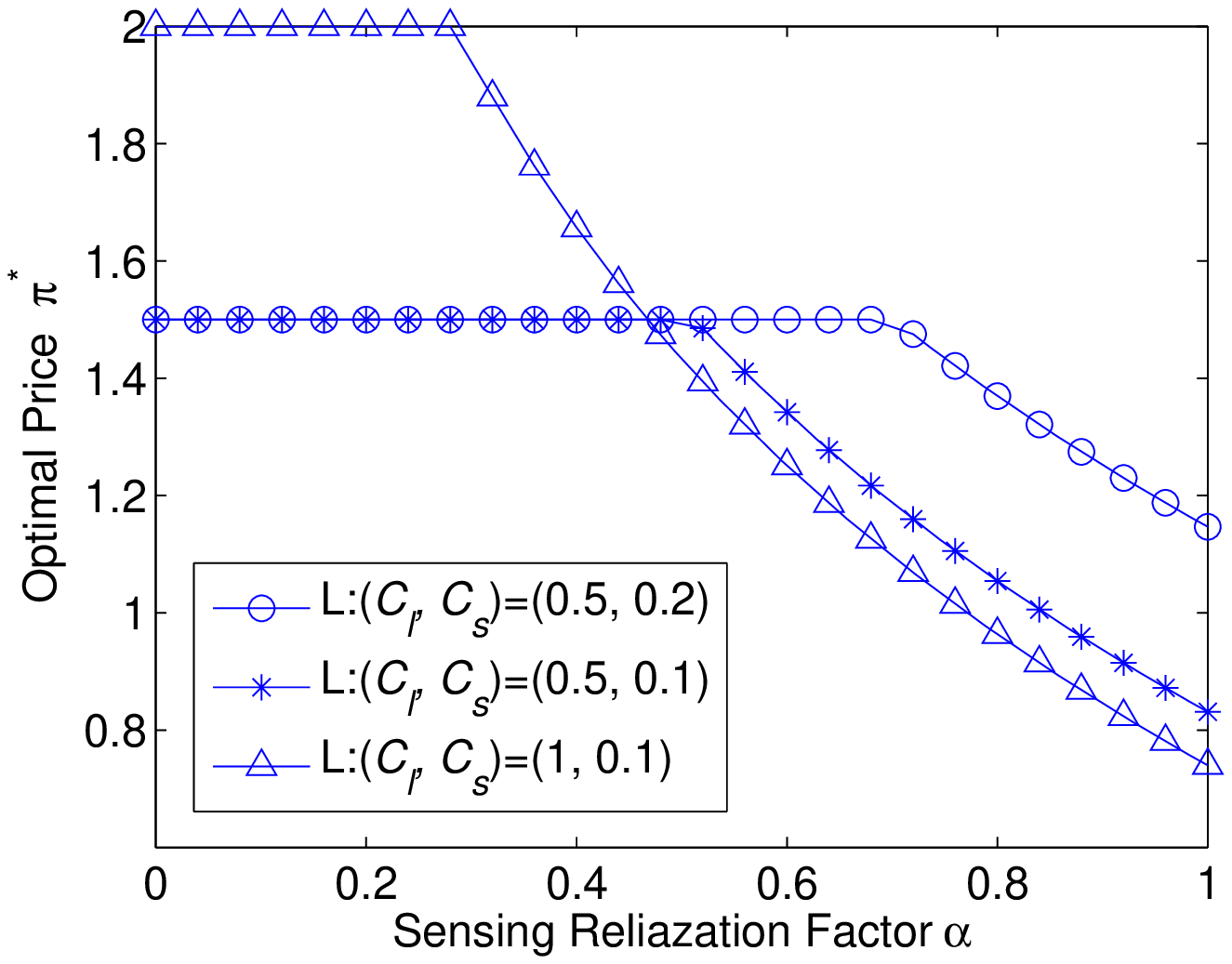}
      \caption{Optimal price $\pi^*$ as a function of \com{sensing cost
}$C_{s}$, \com{leasing cost }$C_{l}$, and \com{sensing realization
factor }$\alpha$.}
      \label{fig_sim3}
   \end{minipage}
\end{figure*}

\com{
\begin{figure}[tt]
\centering
\includegraphics[width=0.35\textwidth]{moderate_Bs.eps}
\caption{Optimal sensing amount $B_s^*$ as a function of sensing
cost $C_{s}$ and leasing cost $C_{l}$.} \label{fig_sim1}
\end{figure}
}

\com{
\begin{figure}[tt]
\centering
\includegraphics[width=0.35\textwidth]{Moderate_Bl.eps}
\caption{Optimal leasing amount $B_l^*$ as a function of sensing
cost $C_{s}$, leasing cost $C_{l}$, and sensing realization factor
$\alpha$.} \label{fig_sim2}
\end{figure}
}


Based on the discussions in Section \ref{sec:BackwardInduction}, we
summarize the operator's equilibrium sensing/leasing/pricing
decisions and the equilibrium resource allocations to the users in
Table \ref{tab:equilibrium}.
%
Several interesting observations are as follows.

\begin{observation}\label{ob1}
Both the optimal sensing amount $B_s^*$ (either 0 or $B_s^{L*}$) and
leasing amount $B_l^*$ are linear in the users' aggregate wireless
characteristics $G=\sum_{i\in\mathcal{I}} {P_i^{\max} h_i}/{n_0}$.
\end{observation}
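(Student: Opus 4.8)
The plan is to exploit the fact that the aggregate characteristic $G$ enters every Stage~I and Stage~II expression only as an overall multiplicative scale, so that both the sensing and the leasing decisions are homogeneous of degree one in $G$. Concretely, I would introduce the normalized sensing variable $x=B_s/G$ and argue that the optimization over $B_s$, and then the resulting optimal $B_l$, can be written entirely in terms of $x$ and the cost parameters $C_s,C_l$.

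First I would establish linearity of the sensing amount $B_s^*$. In the high sensing cost regime ($C_s\geq C_l/2$) we have $B_s^*=0$ by Table~\ref{tab:sensing}, which is trivially linear in $G$. In the low sensing cost regime, the defining first-order condition (\ref{eq:LowSensingAmount}) is already expressed through the ratio $B_s/G$: the logarithmic term is $\tfrac{1}{2}\ln(1/(B_s/G))$ and the last term is $(e^{-(2+C_l)}/(2B_s/G))^2$, so writing $x=B_s/G$ the equation becomes a relation in $x$ alone with $G$ cancelling entirely. Hence the root $x^*$ (unique by the strict concavity of $R_{II}^2$ established earlier) depends only on $C_s,C_l$, and $B_s^{L*}=x^*G$ is linear in $G$. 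I would also note in passing that $x^*\in[e^{-(2+C_l)},e^{-2}]$, consistent with the interval claimed in the text.

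Next I would treat the leasing amount $B_l^*$, now using that $B_s^*$ is a constant (namely $x^*$ or $0$) times $G$. The case boundaries in Table~\ref{tab:leasing} compare $B_s^*\alpha=x^*G\alpha$ against the thresholds $Ge^{-(2+C_l)}$ and $Ge^{-2}$; dividing through by $G$ shows that which case applies is governed only by $\alpha$ (through the single threshold $\alpha=e^{-(2+C_l)}/x^*$) and not by $G$. In the high-cost regime $B_s^*=0$ forces case (CS1) for every $\alpha$, giving $B_l^*=Ge^{-(2+C_l)}$. In the low-cost regime, case (CS1) gives $B_l^*=Ge^{-(2+C_l)}-x^*G\alpha=G(e^{-(2+C_l)}-x^*\alpha)$, while cases (CS2) and (ES3) give $B_l^*=0$. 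In every instance $B_l^*$ is a linear function of $G$.

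The main obstacle is the first step: verifying that the sensing first-order condition is genuinely a function of the ratio $B_s/G$ alone, so that its root scales linearly with $G$. Because (\ref{eq:LowSensingAmount}) is already written in terms of $B_s/G$, this reduces to a clean scale-invariance check rather than a computation. Once it is in place, the linearity of $B_l^*$ follows from the routine substitution and case-by-case inspection above, together with the companion observation that the threshold selecting the applicable leasing case is itself independent of $G$.
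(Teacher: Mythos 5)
Your proposal is correct and follows essentially the same route as the paper: the paper also establishes linearity by observing that the first-order condition (\ref{eq:LowSensingAmount}) depends on $B_s$ only through the ratio $B_s/G$ (and, in the general-SNR robustness proof, by normalizing $\widetilde{B_s}=B_s/G$, $\widetilde{B_l}=B_l/G$ so that the whole optimization becomes independent of $G$), after which linearity of $B_l^*$ follows from the case-by-case expressions in Table \ref{tab:leasing}. Your explicit check that the case-selection thresholds are themselves scale-invariant in $G$ is a correct and slightly more careful rendering of the same argument.
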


The linearity enables us to normalize optimal sensing and leasing
decisions by users' aggregate wireless characteristics, and study
the relationships between the normalized optimal decisions and other
system parameters as in Figs. \ref{fig_sim1} and \ref{fig_sim2}.


%

Figure \ref{fig_sim1} shows how the normalized optimal sensing
decision $B_s^{\ast}/G$ changes with the costs. For a given leasing
cost $C_{l}$, the optimal sensing decision $B_{s}^{\ast}$ decreases
as the sensing cost  $C_{s}$ becomes more expensive, and drops to
zero when $C_s\geq C_l/2$. For a given sensing cost $C_{s}$, the
optimal sensing decision $B_{s}^{\ast}$ increases as the leasing
cost $C_{l}$ becomes more expensive, in which case sensing becomes
more attractive.





Figure \ref{fig_sim2} shows how the normalized optimal leasing
decision $B_l^{\ast}/G$ depends on the costs $C_l$ and $C_s$ as well
as the sensing realization factor $\alpha$ in the low sensing cost
regime (denoted by ``$L$''). In all cases, a higher value $\alpha$
means more bandwidth is obtained from sensing and there is a less
need to lease. Figure \ref{fig_sim2} confirms the threshold
structure of the optimal leasing decisions in Section
\ref{Sect:stageII}, i.e., no leasing is needed whenever the
bandwidth obtained from sensing reaches a threshold. Comparing
different curves, we can see that the operator chooses to lease more
as leasing becomes cheaper or sensing becomes more expensive. For
high sensing cost regime, the optimal leasing amount only depends on
$C_l$ and is independent of $C_{s}$ and $\alpha$, and thus is not
shown here.

\begin{observation} \label{ob2}
The optimal pricing decision $\pi^\ast$ in Stage III is independent
of users' aggregate wireless characteristics $G$.
\end{observation}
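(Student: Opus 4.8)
The plan is to read the equilibrium price $\pi^*$ directly off Table~\ref{tab:equilibrium} and verify its $G$-independence regime by regime. Three of the four tabulated prices equal the constant $1+C_l$, which manifestly contains no $G$: this covers the entire high-sensing-cost regime ($C_s\geq C_l/2$, where $B_s^*=0$) together with the low-cost regime whenever the sensing outcome is small ($\alpha\leq Ge^{-(2+C_l)}/B_s^{L*}$, i.e.\ case~(CS1) of Table~\ref{tab:leasing}). The only nontrivial entry is the low-cost regime with a large realization factor, where leasing is zero and the price is $\pi^*=\ln(G/(B_s^{L*}\alpha))-1$; here $G$ appears explicitly, so the claim requires a genuine argument rather than inspection.

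To treat the easy cases uniformly I would begin from the Stage~III conservative-supply price $\pi^{CS}=\ln(G/(B_l+B_s\alpha))-1$ of Table~\ref{tab:pricing} and substitute the equilibrium investments. Whenever the operator leases up to its threshold (the high-cost regime and case~(CS1)), the total acquired bandwidth is exactly $B_l^*+B_s^*\alpha=Ge^{-(2+C_l)}$, so $G$ cancels inside the logarithm and $\pi^*=\ln(e^{2+C_l})-1=1+C_l$, recovering the three trivial entries at once. When instead the sensing outcome already exceeds the threshold (case~(CS2)), no leasing occurs and the supply is $B_s^{L*}\alpha$; thus the $G$-independence of $\pi^*=\ln(G/(B_s^{L*}\alpha))-1$ reduces \emph{entirely} to showing that $G/B_s^{L*}$ does not depend on $G$, i.e.\ that the equilibrium sensing amount is proportional to $G$.

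The crux is therefore the linearity $B_s^{L*}=\gamma(C_s,C_l)\,G$ already asserted in Observation~\ref{ob1}, which I would extract from the defining first-order condition~(\ref{eq:LowSensingAmount}). The key observation is that this equation involves $B_s$ \emph{only through the dimensionless ratio} $B_s/G$: setting $x=B_s/G$ rewrites it as
\[
\frac{1}{2}\ln\!\left(\frac{1}{x}\right)-\frac{3}{4}-C_s-\left(\frac{e^{-(2+C_l)}}{2x}\right)^2=0,
\]
an equation in $x$ alone whose coefficients depend only on $C_s$ and $C_l$. Its root is unique because $R_{II}^2$ is strictly concave (established in Stage~I), so $x^*=\gamma(C_s,C_l)$ is a $G$-free constant and $B_s^{L*}=\gamma G$. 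Substituting back yields $\pi^*=\ln(1/(\gamma\alpha))-1$, which contains no $G$, completing the argument. The one place demanding care is recognizing~(\ref{eq:LowSensingAmount}) as a fixed-point equation in $B_s/G$; once this scale invariance of the first-order condition is noted, everything else is direct substitution and cancellation.
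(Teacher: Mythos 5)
Your argument is correct and follows essentially the same route as the paper: the paper justifies Observation~\ref{ob2} through Observation~\ref{ob1} (the equilibrium investments are linear in $G$, so $G$ cancels inside the Stage-III price), and in the proof of Theorem~\ref{thm:robustness} it formalizes exactly the normalization $\widetilde{B_s}=B_s/G$, $\widetilde{B_l}=B_l/G$ that you apply to the first-order condition (\ref{eq:LowSensingAmount}). The only blemish is a harmless miscount (Table~\ref{tab:equilibrium} lists three price entries, not four, two of which equal $1+C_l$), which does not affect the argument.
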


Observation \ref{ob2} is closely related to Observation \ref{ob1}.
Since the total bandwidth  is linear in $G$, the ``average''
resource allocation per user is ``constant'' at the equilibrium.
This implies that the price must be independent of the user
population change, otherwise the resource allocation to each
individual user will change with the price accordingly.

\begin{observation}\label{ob3}
The optimal pricing decision $\pi^\ast$ in Stage III is
non-increasing in $\alpha$  in the low sensing cost regime.
\end{observation}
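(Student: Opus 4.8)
The plan is to read off the explicit form of $\pi^{*}$ as a function of $\alpha$ from the low sensing cost column of Table~\ref{tab:equilibrium}, and then check monotonicity branch by branch. The crucial point to keep in mind throughout is that the optimal sensing amount $B_{s}^{L*}$ is chosen in Stage~I, \emph{before} $\alpha$ is realized, so it is a fixed constant independent of $\alpha$ in this argument. Writing $\alpha_{0} := Ge^{-(2+C_l)}/B_{s}^{L*}$ for the threshold appearing in the table, the equilibrium price is
\[
\pi^{*}(\alpha)=
\begin{cases}
1+C_l, & 0\le\alpha\le\alpha_{0},\\[2pt]
\ln\!\left(\dfrac{G}{B_{s}^{L*}\alpha}\right)-1, & \alpha>\alpha_{0}.
\end{cases}
\]

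For $\alpha\le\alpha_{0}$ the operator leases up to the fixed threshold $Ge^{-(2+C_l)}$, so the total supply is constant and the price equals the constant $1+C_l$; hence $\pi^{*}$ is (trivially) non-increasing there. For $\alpha>\alpha_{0}$ no leasing occurs, the total supply is $B_{s}^{L*}\alpha$, and I would differentiate the second branch: since $\ln\!\big(G/(B_{s}^{L*}\alpha)\big)=\ln G-\ln B_{s}^{L*}-\ln\alpha$, its derivative with respect to $\alpha$ equals $-1/\alpha<0$, so $\pi^{*}$ is strictly decreasing on this branch.

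The one remaining thing is to rule out an upward jump where the two branches meet. Substituting $\alpha=\alpha_{0}=Ge^{-(2+C_l)}/B_{s}^{L*}$ into the second branch gives $\ln\!\big(e^{(2+C_l)}\big)-1=1+C_l$, which coincides with the value of the first branch, so $\pi^{*}$ is continuous at $\alpha_{0}$. Combining the three facts --- constant on $[0,\alpha_{0}]$, continuous at $\alpha_{0}$, and strictly decreasing on $(\alpha_{0},\infty)$ --- yields that for any $x<\alpha_{0}<y$ we have $\pi^{*}(x)=\pi^{*}(\alpha_{0})>\pi^{*}(y)$, so $\pi^{*}$ is non-increasing on the whole range $[0,1]$.

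Since every step is elementary, there is no genuine obstacle; the only subtlety worth flagging is the continuity check at the seam $\alpha_{0}$, which is precisely what guarantees that the piecewise definition does not create a spurious increase. The economic content is transparent: a larger $\alpha$ produces more sensed bandwidth, and because the operator always stays in the conservative supply regime (established in the Stage~I analysis), a larger total supply forces a lower market-clearing price.
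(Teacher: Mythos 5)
Your proof is correct and follows essentially the same route as the paper's own justification of Observation~\ref{ob3}: split on whether $\alpha$ is below or above the leasing threshold $Ge^{-(2+C_l)}/B_s^{L*}$, note the price is the constant $1+C_l$ on the first branch and $\ln\!\left(G/(B_s^{L*}\alpha)\right)-1$ (decreasing in $\alpha$) on the second, with your explicit continuity check at the seam making the paper's informal argument airtight. The only caveat is scope: your argument covers the closed-form setting (high SNR, uniform $\alpha$) to which Table~\ref{tab:equilibrium} applies, while the extension to general SNR and general distributions is handled separately by the paper in Appendix~\ref{app:thm4} via the thresholds $B_{th1}$ and $B_{th2}(C_l)$.
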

%

First, in the low sensing cost regime where the sensing result is
poor (i.e., $\alpha$ is small as the third column in Table
\ref{tab:equilibrium}), the operator will lease additional resource
such that the total bandwidth reaches the threshold
$Ge^{-(2+C_{l})}$. In this case, the price is a constant and is
independent of the value of $\alpha$. Second, when the sensing
result is good (i.e., $\alpha$ is large as in the last column in
Table \ref{tab:equilibrium}), the total bandwidth is large enough.
In this case, as  $\alpha$ increases,  the amount of total bandwidth
increases, and the optimal price decreases to maximize the profit.

Figure~\ref{fig_sim3} shows how the optimal price changes with
various costs and $\alpha$ in the low sensing cost regime. It is
clear that price is first a constant and then starts to decrease
when $\alpha$ is larger than a threshold. The threshold decreases in
the optimal sensing decision of $B_{s}^{L\ast}$: a smaller sensing
cost or a higher leasing cost will lead to a higher $B_{s}^{L\ast}$
and thus a smaller threshold.

\com{
\begin{figure}[tt]
\centering
\includegraphics[width=0.32\textwidth]{Moderate_pi.eps}
\caption{Optimal price $\pi^*$ as a function of sensing cost
$C_{s}$, leasing cost $C_{l}$, and sensing realization factor
$\alpha$.} \label{fig_sim3}
\end{figure}
}

\begin{figure}[tt]
\centering
\includegraphics[width=0.53\textwidth]{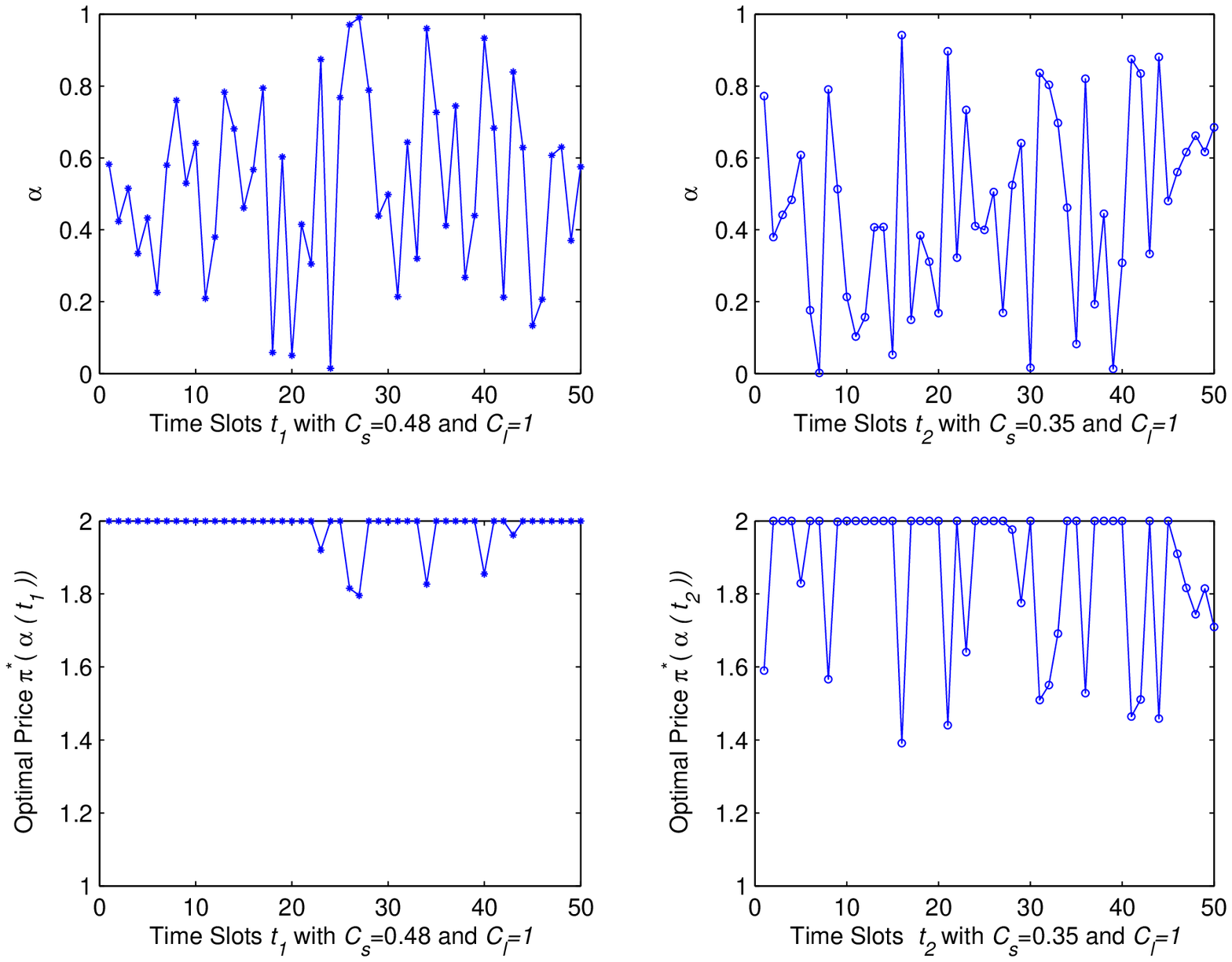}
\caption{{Optimal price $\pi^*$ over time with different sensing
costs and $\alpha$ realizations}} \label{fig:PriceDynamics}
\end{figure}

{It is interesting to notice that the equilibrium price only changes
in a time slot where the sensing realization factor $\alpha$ is
large. This means that although operator has the freedom to change
the price in every time slot, the actual variation of price is much
less frequent. This leads to less overhead and makes it easier to
implement in practice. Figure~\ref{fig:PriceDynamics} illustrates
this with different sensing costs  and $\alpha$ realizations. The
left two subfigures correspond to the realizations of $\alpha$ and
the corresponding prices with $C_s= 0.48$ and $C_l=1$. As the
sensing cost $C_{s}$ is quite high in this case, the operator does
not rely heavily on sensing. As a result, the variability of
$\alpha$ (in the upper subfigure) has very small impact on the
equilibrium price (in the lower subfigure). In fact, the price only
changes in 11 out 50 time slots, and the maximum amplitude variation
is around $10\%$. The right two figures correspond to the case where
$C_s= 0.35$ and $C_l=1$. As sensing cost is cheaper in this case,
the operator senses more and the impact of $\alpha$ on price is
larger. The price changes in 30 out of 50 time slots, and the
variation in amplitude can be as large as $30\%$.}


\begin{observation}\label{ob4}
The operator will sense the spectrum only if the sensing cost is
lower than a threshold. Furthermore, it will lease additional
spectrum only if the spectrum obtained through sensing is below a
threshold.
\end{observation}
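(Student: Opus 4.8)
The plan is to recognize that both assertions are direct corollaries of the Stage~I and Stage~II analyses already carried out, and summarized in Tables~\ref{tab:sensing} and~\ref{tab:leasing}. For each decision the only task is to isolate the sign condition that switches the optimal amount from zero to strictly positive, and to read off the associated threshold. I would treat the sensing claim and the leasing claim separately.

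For the \emph{sensing threshold}, I would work with the piecewise expected Stage~II profit $R_{II}(B_s)$ defined in~(\ref{eq:sensingdomain}). The crux is that $R_{II}$ is continuously differentiable at the interface $B_s=Ge^{-(2+C_l)}$ between Cases~I and~II: the linear piece $R_{II}^1$ has constant slope $C_l/2-C_s$, and evaluating the derivative of $R_{II}^2$ in~(\ref{eq:LowSensingAmount}) at $B_s=Ge^{-(2+C_l)}$ yields exactly the same value $C_l/2-C_s$. Since $R_{II}^1$ is linear and $R_{II}^2$ is strictly concave (its second derivative is negative for $B_s>Ge^{-(2+C_l)}$, as noted in Case~II), this common boundary slope controls the location of the global maximizer: when $C_s\geq C_l/2$ the slope is non-positive, so $R_{II}$ is non-increasing over its domain and $B_s^\ast=0$; when $C_s<C_l/2$ the slope is positive, so the optimum is pushed into the strictly concave piece and $B_s^\ast=B_s^{L\ast}>0$. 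Combined with the fact, already established in Stage~I, that Case~II always dominates Case~III, this identifies $C_l/2$ as the sensing-cost threshold.

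For the \emph{leasing threshold}, I would simply invoke Theorem~\ref{thm:leasing} and Table~\ref{tab:leasing}. The optimal leasing amount is $B_l^\ast=Ge^{-(2+C_l)}-B_s\alpha$ in Case~(CS1) and $B_l^\ast=0$ in Cases~(CS2) and~(ES3). Because Case~(CS1) is precisely the regime $B_s\alpha\leq Ge^{-(2+C_l)}$, the leasing amount is strictly positive exactly when the sensed bandwidth $B_s\alpha$ lies strictly below $Ge^{-(2+C_l)}$, and vanishes once it reaches or exceeds this value. This pins the leasing threshold at $Ge^{-(2+C_l)}$.

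The leasing half is immediate from the table, so the only mild obstacle is the sensing half. The single step deserving care is verifying that the two pieces of $R_{II}$ share the same slope $C_l/2-C_s$ at the case boundary; it is precisely this slope-matching that lets the threshold collapse to the clean value $C_l/2$, rather than forcing a separate comparison of the optimal values on either side of the interface. Once the slope-matching and the strict concavity of $R_{II}^2$ are in hand, the monotonicity argument and the resulting threshold structure follow directly.
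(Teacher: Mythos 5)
Your proposal is correct and follows essentially the same route as the paper: the leasing threshold $Ge^{-(2+C_l)}$ is read directly off Theorem~\ref{thm:leasing}/Table~\ref{tab:leasing}, and the sensing threshold $C_l/2$ comes from the piecewise analysis of $R_{II}(B_s)$ in Stage~I (linearity of $R_{II}^1$ with slope $C_l/2-C_s$, strict concavity of $R_{II}^2$, and the dominance of Case~II over Case~III). Your explicit verification that the derivative of $R_{II}^2$ at $B_s=Ge^{-(2+C_l)}$ also equals $C_l/2-C_s$ is a correct and worthwhile detail that the paper leaves implicit, but it does not change the argument's structure.
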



\begin{observation}\label{ob5}
Each user $i$ obtains the same SNR independent of $g_{i}$ and a
payoff linear in $g_{i}$.
\end{observation}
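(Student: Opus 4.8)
The plan is to derive Observation~\ref{ob5} directly from the Stage~IV demand characterization combined with the equilibrium price summarized in Table~\ref{tab:equilibrium}, exploiting the fact that the operator announces a \emph{single} price that decouples the per-user quantities cleanly. First I would recall from eq.~(\ref{eq:optbandwidth}) that, for \emph{any} price $\pi$ announced in Stage~III, user $i$'s unique optimal bandwidth is $w_i^*(\pi)=g_i e^{-(1+\pi)}$. Substituting this into the SNR expression $\mathtt{SNR}_i=g_i/w_i^*(\pi)$ immediately gives $\mathtt{SNR}_i=e^{(1+\pi)}$, and substituting into the payoff (\ref{eq:utility}) gives $u_i(\pi,w_i^*(\pi))=g_i e^{-(1+\pi)}$. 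These two identities are the heart of the argument: the SNR depends only on $\pi$, while the payoff factorizes as $g_i$ times a price-dependent coefficient that is common to all users.

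The remaining step is to confirm that at the equilibrium the operator announces a single scalar price $\pi^*$ whose value does not depend on any individual $g_i$. Since one price applies to the entire user population in Stage~III, the same $\pi^*$ is seen by every user. From Table~\ref{tab:equilibrium}, $\pi^*$ takes one of two forms: either the constant $1+C_l$ (high sensing cost, or low sensing cost with small $\alpha$), or $\ln\!\left(G/(B_s^{L*}\alpha)\right)-1$ (low sensing cost with large $\alpha$). In the first case $\pi^*$ plainly depends only on the cost parameters. In the second case the apparent dependence on $G$ dissolves once we invoke the linearity $B_s^{L*}=cG$ established in Stage~I and restated in Observation~\ref{ob1}: then $G/(B_s^{L*}\alpha)=1/(c\alpha)$, leaving $\pi^*$ a function of the costs and $\alpha$ only. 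In either case $\pi^*$ is free of any single user's wireless characteristic.

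Putting the two steps together, I would substitute the equilibrium $\pi^*$ into the two identities above. The SNR becomes $e^{(1+\pi^*)}$, identical for every user and independent of $g_i$, which establishes the equal-SNR (fair and predictable QoS) claim; the payoff becomes $g_i e^{-(1+\pi^*)}$, which is linear in $g_i$ with the user-independent slope $e^{-(1+\pi^*)}$. I expect the only delicate point to be the cancellation of $G$ in the conservative-supply branch with large $\alpha$: one must genuinely use the linearity of $B_s^{L*}$ in $G$ rather than read the price formula at face value. Everything else is a direct substitution into the Stage~IV formulas, so there is no real computational obstacle beyond that single observation.
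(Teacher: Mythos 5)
Your proposal is correct and follows essentially the same route as the paper: the paper establishes Observation~\ref{ob5} by the very substitutions you perform in Stage~IV, namely $\mathtt{SNR}_i=g_i/w_i^*(\pi)=e^{(1+\pi)}$ and $u_i(\pi,w_i^*(\pi))=g_ie^{-(1+\pi)}$, evaluated at the common equilibrium price from Table~\ref{tab:equilibrium}. Your extra remark that $G$ cancels in the conservative-supply branch via the linearity $B_s^{L*}\propto G$ is a valid refinement (it is really the content of Observations~\ref{ob1} and~\ref{ob2}), but it is not needed for the equal-SNR and linear-payoff claims themselves, which follow from the single announced price alone.
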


Observation \ref{ob5} shows that users obtains fair and predictable
resource allocation at the equilibrium. In fact, a user does not
need to know anything about the total number and payoffs of other
users in the system. It can simply predict its QoS  if it knows the
cost structure of the network ($C_{s}$ and $C_{l}$)\footnote{The
analysis of the game, however, does not require the users to know
$C_s$ or $C_l$.}. Such property is highly desirable in practice.


Finally, users achieve the same high SNR at the equilibrium. The SNR
value is either $e^{(2+C_l)}$ or $G/(B_s^{L*}\alpha)$, both of which
are larger than $e^2$. This means that the approximation ratio
$\ln(\mathtt{SNR}_i)/\ln(1+\mathtt{SNR}_i)>\ln(e^{2})/\ln(1+e^{2})\approx94\%$.
The ratio can even be close to one if the price $\pi$ is high.

In Sections \ref{subsec:stageIV} and \ref{subsec:stageI}, we made
the high SNR regime approximation and the uniform distribution
assumption of $\alpha$ to obtain closed-form expressions. Next we
show that relaxing both assumptions will not change any of the major
insights.

\subsection{Robustness of the Observations}\label{subsec:robustness}
\begin{theorem}\label{thm:robustness}
Observations \ref{ob1}-\ref{ob5} still hold under the general SNR
regime (as in (\ref{eq:rate}))  and any general distribution of
$\alpha$.
\end{theorem}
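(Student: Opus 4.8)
The plan is to isolate the single structural property that drives all five observations---that the whole problem is positively homogeneous of degree one in the users' aggregate characteristic $G$---and to show that this property survives both the exact rate model (\ref{eq:rate}) and an arbitrary distribution of $\alpha$. First I would redo Stage IV with the exact rate. Writing the first-order condition of the concave payoff (\ref{eq:utility}) and substituting $w_i=g_i y$ collapses it to $\ln(1+1/y)-1/(1+y)=\pi$, an equation in $y$ alone. Hence the optimal demand has the separable form $w_i^*(\pi)=g_i\,y(\pi)$ with a common, $g_i$-independent factor $y(\pi)$, and implicit differentiation shows $y(\pi)$ is strictly decreasing with $y(0^+)=\infty$ and $y(\infty)=0$. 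This already delivers Observation \ref{ob5}: $\mathtt{SNR}_i=g_i/w_i^*=1/y(\pi)$ is common to all users, the optimal payoff equals $g_i[\,y(\pi)\ln(1+1/y(\pi))-\pi y(\pi)\,]$ and is linear in $g_i$, and the total demand $\sum_i w_i^*(\pi)=G\,y(\pi)$ is linear in $G$, the exact analogue of (\ref{eq:demand}).

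Next comes the homogeneity argument. Because demand is $G\,y(\pi)$, the Stage III profit (\ref{eq:profit}) satisfies $R(\lambda B_s,\alpha,\lambda B_l,\pi)=\lambda\,R(B_s,\alpha,B_l,\pi)$ once $G$ is also scaled to $\lambda G$. Introducing normalized investments $b_s=B_s/G$ and $b_l=B_l/G$, the profit can be written as $G\,\tilde R(b_s,b_l,\pi,\alpha)$ with $\tilde R$ independent of $G$. Taking the expectation over $\alpha$ in Stage I preserves this degree-one homogeneity for \emph{any} distribution, since $G$ is deterministic and factors out of the expectation while it acts only on the $\alpha$-slice. Consequently the backward-induction optima $b_s^*,b_l^*$ and the optimal price $\pi^*$ are all independent of $G$, which is exactly Observation \ref{ob2}, while $B_s^*=G b_s^*$ and $B_l^*=G b_l^*$ are linear in $G$, which is Observation \ref{ob1}. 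Note that this step uses neither a closed form for $y(\pi)$ nor the uniform law of $\alpha$.

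The threshold structures (Observations \ref{ob3} and \ref{ob4}) then follow by repeating the backward induction on $\tilde R$, replacing the explicit high-SNR formulas by the qualitative monotonicity of $y$. In Stage III the normalized demand-revenue $\pi y(\pi)$ is single-peaked with a unique maximizer $\pi_0$ (the analogue of $\pi=1$), yielding the excessive/conservative dichotomy with threshold $G\,y(\pi_0)$; in the conservative regime the price solves $y(\pi)=b_s\alpha+b_l$, so strict monotonicity of $y$ makes $\pi^*$ strictly decreasing in supply. In Stage II, concavity of the conservative-regime profit in $b_l$ gives the ``lease up to a target level'' policy, so no leasing occurs once sensing already meets the target; this is the leasing threshold of Observation \ref{ob4} and simultaneously makes $\pi^*$ constant for small $\alpha$ and decreasing for large $\alpha$, which is Observation \ref{ob3}. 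In Stage I, at the target bandwidth the price and revenue are fixed, so one sensed unit simply substitutes for $\alpha$ leased units, giving marginal value $\alpha C_l-C_s$; the marginal expected profit at $b_s=0$ therefore has the sign of $C_l\,E[\alpha]-C_s$, a distribution-dependent constant replacing $C_l/2$, which is the sensing threshold of Observation \ref{ob4}.

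The main obstacle is that every step above rested, in the high-SNR analysis, on the closed form $y(\pi)=e^{-(1+\pi)}$, which made concavity and the explicit thresholds transparent. In the general regime $y(\pi)$ is only implicit, so I expect the genuine work to be establishing the qualitative facts I invoke without formulas: strict monotonicity of $y(\pi)$, single-peakedness of $\pi y(\pi)$, and---most delicately---concavity (or at least unimodality) of the Stage II and Stage I objectives, which is what guarantees uniqueness of the equilibrium and the clean threshold statements. For a general distribution of $\alpha$ I would argue that concavity of the Stage I objective follows from concavity of the Stage II integrand in $B_s$ for each fixed $\alpha$, since expectation preserves concavity; this reduces the entire difficulty back to a single-variable sign analysis of the implicitly defined $y$.
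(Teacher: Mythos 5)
Your proof is correct in outline and follows the same architecture as the paper's. The core of your argument --- that the Stage IV demand separates as $w_i^*(\pi)=g_i\,y(\pi)$ with $y$ defined implicitly by $\ln(1+1/y)-1/(1+y)=\pi$, and that the whole investment--pricing problem is then degree-one homogeneous in $G$, so that normalizing by $G$ removes $G$ from the problem for any distribution of $\alpha$ --- is exactly the paper's argument with $y(\pi)=1/Q(\pi)$, where $Q(\pi)$ solves $\ln(1+Q)-Q/(1+Q)=\pi$; this disposes of Observations \ref{ob1}, \ref{ob2}, and \ref{ob5} identically in both treatments. For Observations \ref{ob3} and \ref{ob4} your plan matches the paper's Appendix \ref{app:thm4} step for step, but note that the two facts you flag as ``the genuine work'' --- strict monotonicity of $y(\pi)$ and single-peakedness of $\pi y(\pi)$ --- are precisely what the paper verifies by explicit computation: it differentiates $D(\pi)=\pi G/Q(\pi)$ to get $D'(\pi)=\bigl[2Q^2+Q-(1+Q)^2\ln(1+Q)\bigr]/Q^3$, shows a single sign change at $Q=2.163$ (i.e.\ $\pi=0.468$, the analogue of your $\pi_0$), and then works in the bandwidth domain with $D'(B)$ to exhibit the unique leasing threshold $B_{th2}(C_l)$ solving $D'(B)=C_l$. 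So your proposal is not complete until that sign-change computation is done; it does not follow from monotonicity of $y$ alone. On the other side of the ledger, two pieces of your argument are sharper than the paper's: your marginal analysis at $b_s=0$ (one sensed unit displaces $\alpha$ leased units, giving expected marginal value $C_l\,E[\alpha]-C_s$ and hence an explicit sensing threshold $C_l E[\alpha]$, which correctly specializes to $C_l/2$ under the uniform law) replaces the paper's unproved claim that the sensing threshold ``is not difficult to show''; and your observation that concavity of the Stage I objective follows from concavity of the Stage II value function in $B_s$ (partial maximization and expectation both preserve concavity) supplies the uniqueness/threshold justification that the paper leaves implicit in the general case.
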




\begin{proof}
We represent a user $i$'s payoff function in the general SNR regime,
\begin{equation}\label{eq:generalSNR}
u_i(\pi,w_i) = w_i \ln\left(1+\frac{g_i}{w_i}\right)-\pi w_i.
\end{equation}
The optimal demand $w_i^*(\pi)$ that maximizes (\ref{eq:generalSNR})
is $w_i^*(\pi)=g_i/Q(\pi),$ where $Q(\pi)$ is the unique positive
solution to $F(\pi,Q):=\ln(1+Q)-\frac{Q}{1+Q}-\pi=0.$ We find the
inverse function of $Q(\pi)$ to be $\pi(Q)=\ln(1+Q)-\frac{Q}{1+Q}$.
By applying the implicit function theorem, we can obtain the
first-order derivative of function $Q(\pi)$ over $\pi$ as
\begin{equation}\label{eq:dQ(pi)}Q'(\pi)=-\frac{\partial F(\pi,Q)/\partial \pi}{\partial
F(\pi,Q)/\partial Q}=\frac{(1+Q(\pi))^2}{Q(\pi)},\end{equation}
which is always positive. Hence, $Q(\pi)$ is increasing in $\pi$.

User $i$'s optimal payoff is
\begin{equation}\label{eq:generalpayoff}u_i(\pi,w_i^*(\pi))=\frac{g_i}{Q(\pi)}[\ln(1+Q(\pi))-\pi].\end{equation}
As a result, a user's optimal SNR equals
$g_{i}/w_{i}^{\ast}(\pi)=Q(\pi)$ and is \emph{user-independent}. The
total demand from all users equals ${G}/Q(\pi)$, and the operator's
 investment and pricing problem is
\begin{align}\label{eq:star}
R^*=&\max_{B_s\geq 0} E_{\alpha\in [0,1]} [\max_{B_l\geq0}
\max_{\pi\geq 0}
(\min\left(\pi\frac{G}{Q(\pi)},\pi(B_l+B_s\alpha)\right)\nonumber\\&-B_sC_s-B_lC_l)].
\end{align}
Define $\widetilde{R^*}=\frac{R^*}{{G}},
\widetilde{B_l}=\frac{B_l}{{G}}$, and
$\widetilde{B_s}=\frac{B_s}{{G}}$.
Then solving (\ref{eq:star})  is equivalent to solving
\begin{align}\label{eq:starstar}
\widetilde{R^*}=&\max_{\widetilde{B_s}\geq 0} E_{\alpha\in [0,1]}
[\max_{\widetilde{B_l}\geq0} \max_{\pi\geq 0}
(\min\left(\frac{\pi}{Q(\pi)},\pi(\widetilde{B_l}+\widetilde{B_s}\alpha)\right)\nonumber\\&-\widetilde{B_s}C_s-\widetilde{B_l}C_l)].
\end{align}
In Problem (\ref{eq:starstar}), it is clear that the operator's
optimal decisions on leasing, sensing and pricing do not depend on
users' aggregate wireless characteristics. This is true for any
continuous distribution of $\alpha$. And a user's optimal payoff in
eq. (\ref{eq:generalpayoff}) is linear in $g_i$ since $Q(\pi)$ is
independent of users' wireless characteristics. This shows that
Observations \ref{ob1}, \ref{ob2}, and \ref{ob5} hold  for the
general SNR regime and any general distribution of $\alpha$. We can
also show that Observations \ref{ob3} and \ref{ob4} hold in the
general case, with a detailed proof in Appendix \ref{app:thm4}.
\end{proof}

%

%



\section{The Impact of Spectrum Sensing Uncertainty}
\label{sec:SensingImpact}
%

The key difference between our model and most existing literature
(e.g.,
\cite{IEEEhowto:Jia,IEEEhowto:Niyato,jia2009revenue,sengupta2007economic,IEEEhowto:Ileri,IEEEhowto:Xing})
is the possibility of obtaining resource through the cheaper but
uncertain approach of spectrum sensing. Here we will elaborate the
impact of sensing on the performances of operator and users by
comparing with the \emph{baseline case} where sensing is not
possible. Note that in the high sensing cost regime it is optimal
not to sense, as a result, the performance of the operator and users
will be the same as the baseline case. Hence we will focus on the
low sensing cost regime in Table \ref{tab:equilibrium}.

\begin{observation}\label{prob1}
The operator's optimal \emph{expected} profit always benefits from
the availability of spectrum sensing in the low sensing cost regime.
\end{observation}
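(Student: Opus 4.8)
The plan is a revealed-preference (``more options cannot hurt'') argument, sharpened to a strict inequality. The first step is to identify the baseline expected profit. When sensing is unavailable the operator is forced to set $B_s=0$ and can only lease, which is precisely the feasible point $B_s=0$ of the Stage-I problem $R_I=\max_{B_s\ge 0}R_{II}(B_s)$. By case (CS1) of Table~\ref{tab:leasing} (applicable since $B_s\alpha=0\le Ge^{-(2+C_l)}$ for every $\alpha$), the operator optimally leases $B_l=Ge^{-(2+C_l)}$ and earns the deterministic profit $R_{II}(0)=Ge^{-(2+C_l)}$, which is exactly $R_I^H$ in Table~\ref{tab:equilibrium}. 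Hence the no-sensing benchmark equals $Ge^{-(2+C_l)}$, and the task reduces to showing $R_I^L>Ge^{-(2+C_l)}$ in the low sensing cost regime.

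For the strict improvement I would simply evaluate $R_{II}$ at the right endpoint of Case~I, $B_s=Ge^{-(2+C_l)}$, and lower-bound the maximum by this value. On $B_s\in[0,Ge^{-(2+C_l)}]$ the expected profit is the affine function $R_{II}^1(B_s)=Ge^{-(2+C_l)}+B_s\!\left(\tfrac{C_l}{2}-C_s\right)$, so $R_{II}(Ge^{-(2+C_l)})=Ge^{-(2+C_l)}\!\left(1+\tfrac{C_l}{2}-C_s\right)$. Since $R_I^L=\max_{B_s\ge 0}R_{II}(B_s)\ge R_{II}(Ge^{-(2+C_l)})$ and $C_s<C_l/2$ in the low sensing cost regime (the boundary case $C_s=C_l/2$ is treated below), the factor $1+\tfrac{C_l}{2}-C_s>1$, giving $R_I^L>Ge^{-(2+C_l)}=R_I^H$ as required. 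Equivalently, the right derivative of $R_{II}$ at the origin is $\tfrac{C_l}{2}-C_s>0$, so sensing a positive amount strictly raises the expected profit; this also explains why the optimal $B_s^{L*}$ is strictly positive.

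I do not expect a genuine obstacle here; the care needed is bookkeeping. First, the observation concerns the \emph{expected} profit that Stage~I maximizes, not the per-slot realized profit: for a small realization of $\alpha$ the sensing outlay $B_s^{L*}C_s$ is largely wasted and the realized profit can dip below the benchmark, so the strict gain is only on average. Second, strictness requires the open condition $C_s<C_l/2$; at the boundary $C_s=C_l/2$ the slope $\tfrac{C_l}{2}-C_s$ vanishes, $B_s^*=0$, and the benefit degenerates to equality, consistently merging with the high sensing cost regime. Finally, the same marginal computation carries over to a general distribution of $\alpha$ (still within case (CS1), whose expectation gives slope $C_l\,E[\alpha]-C_s$), so sensing strictly helps whenever the cost falls below the analogous threshold; combined with the normalization in Problem~(\ref{eq:starstar}) from Theorem~\ref{thm:robustness}, this indicates the insight is not an artifact of the uniform assumption.
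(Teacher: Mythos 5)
Your proposal is correct, and it is worth noting that the paper itself gives no formal proof of this observation: it is supported only by the numerical illustration in Fig.~\ref{fig:comparingrevenueCs}, with the underlying logic left implicit in the Stage-I derivation. Your argument supplies exactly that missing rigor, and it matches the paper's own machinery: the no-sensing benchmark is the feasible point $B_s=0$ of the Stage-I problem, where case (CS1) forces $B_l^*=Ge^{-(2+C_l)}$ and profit $R_{II}(0)=Ge^{-(2+C_l)}=R_I^H$, and since $R_{II}^1$ has slope $C_l/2-C_s>0$ on $[0,Ge^{-(2+C_l)}]$, evaluating at the right endpoint already gives $R_I^L\geq Ge^{-(2+C_l)}\bigl(1+\tfrac{C_l}{2}-C_s\bigr)>R_I^H$. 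Your bookkeeping is also sound on the two points where care is needed: the claim is about the \emph{expected} profit only (the realized profit can fall below the benchmark for small $\alpha$, which is the content of Theorem~\ref{prob2}), and strictness fails exactly at the boundary $C_s=C_l/2$, where the regime degenerates into the high-cost case. The extension to a general distribution of $\alpha$ via the slope $C_lE[\alpha]-C_s$ is likewise consistent with Theorem~\ref{thm:robustness}. No gaps.
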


\begin{figure*}
   \begin{minipage}[t]{0.32\linewidth}
      \centering
      \includegraphics[width=1.1\textwidth]{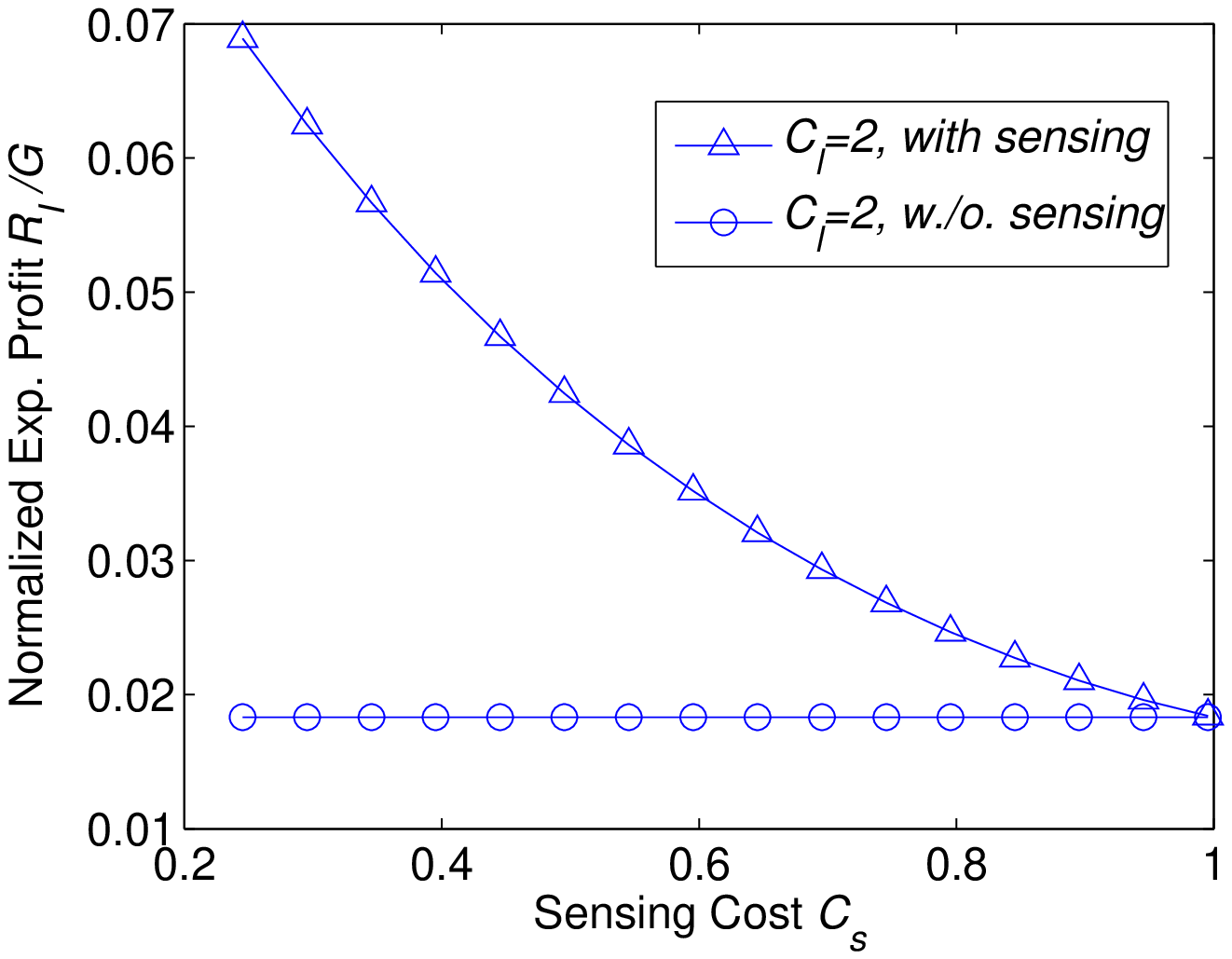}
\caption{Operator's normalized optimal \rev{\emph{expected}} profit
as a function of \com{sensing cost }$C_{s}$ and \com{leasing cost
}$C_{l}$.\com{ The baseline is the case without sensing.}}
\label{fig:comparingrevenueCs}
   \end{minipage}
   \begin{minipage}[t]{0.32\linewidth}
      \centering
      \includegraphics[width=1.1\textwidth]{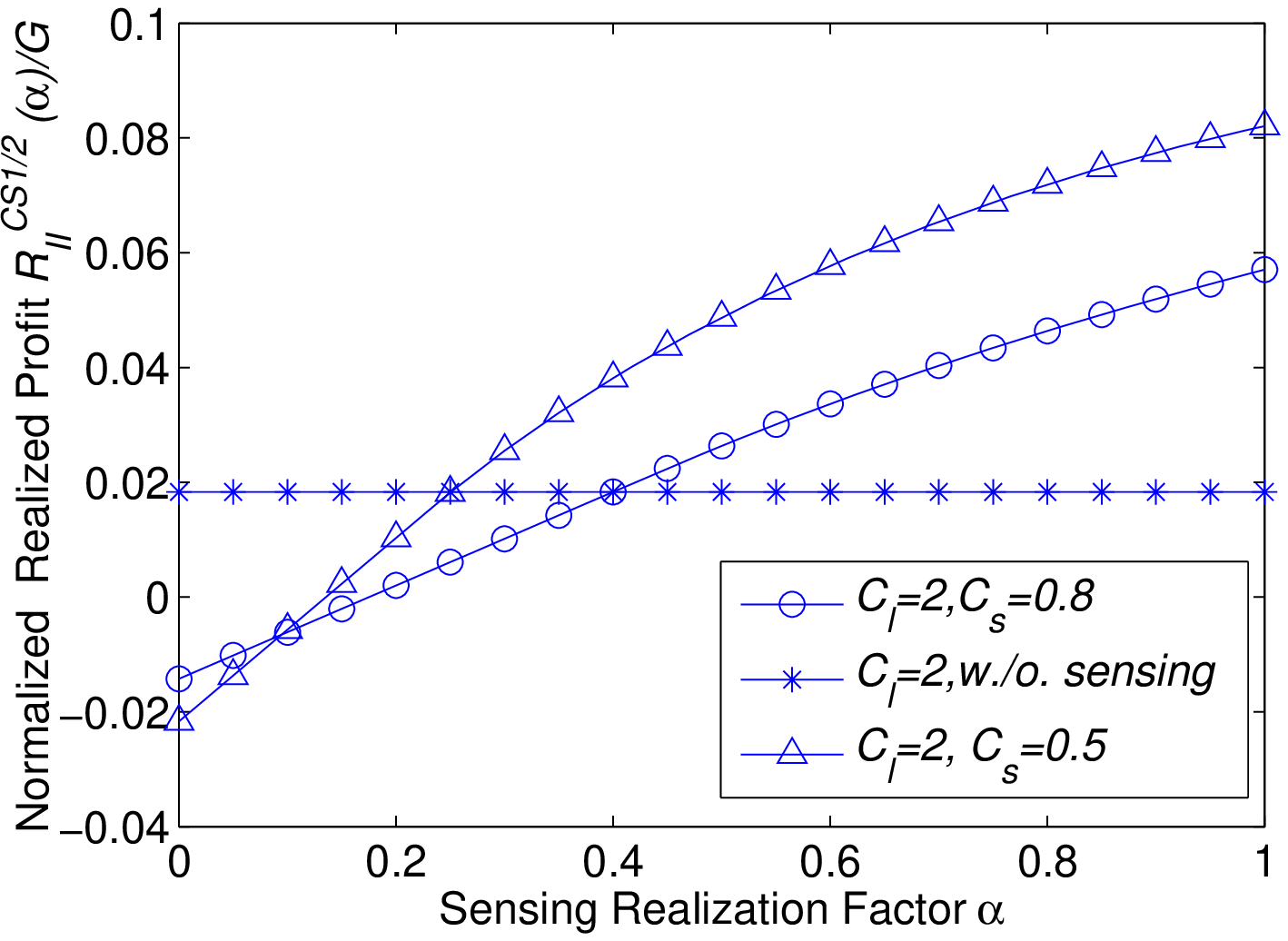}
\caption{Operator's normalized optimal \rev{\emph{realized}} profit
as a function of \com{sensing realization factor }$\alpha$. \com{The
baseline is the case without sensing.}}
\label{fig:comparingrevenuealpha}
   \end{minipage}
   \begin{minipage}[t]{0.32\linewidth}
      \centering
      \includegraphics[width=1.1\textwidth]{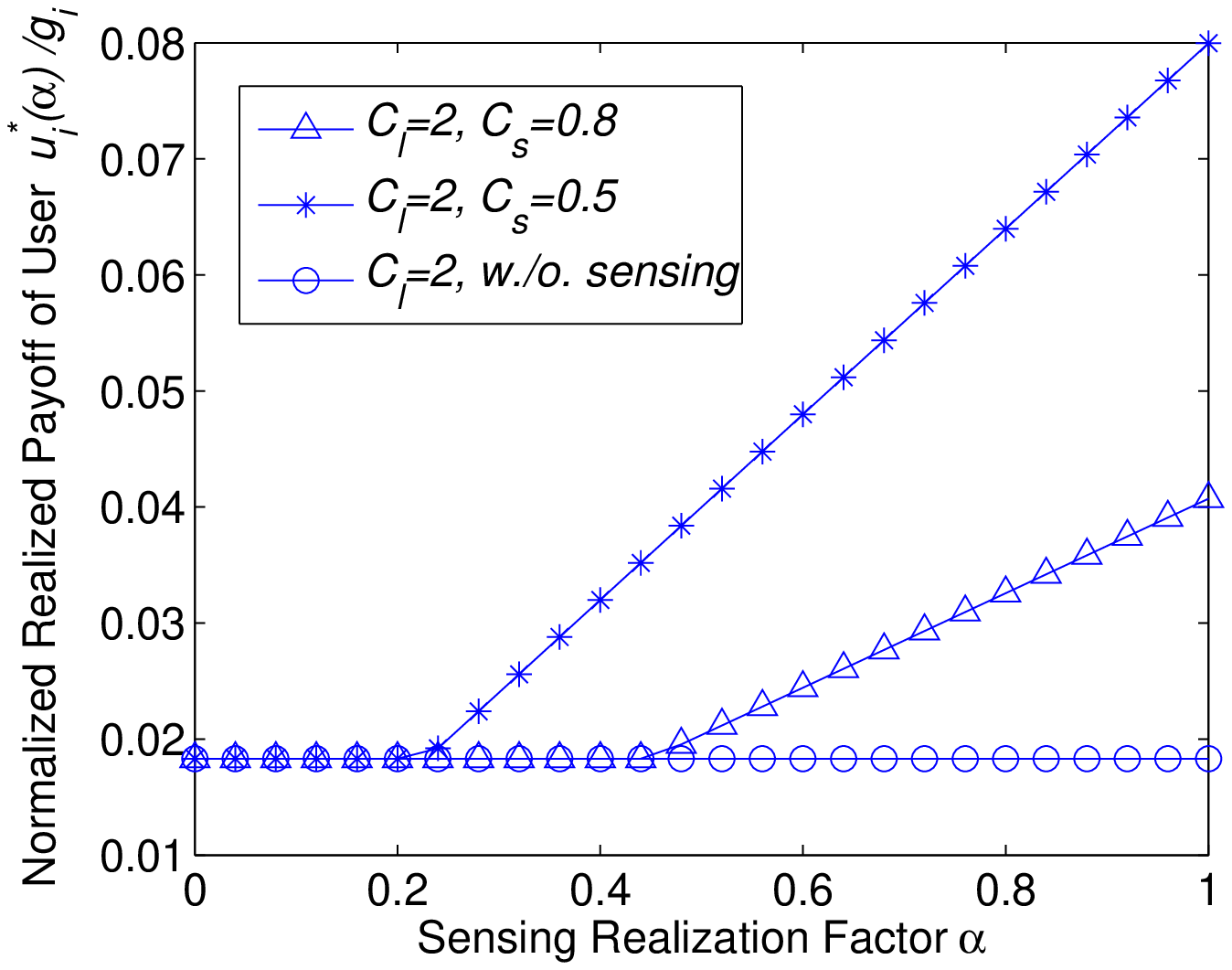}
\caption{User $i$'s normalized optimal realized payoff as a function
of \com{sensing realization factor }$\alpha$.}
\label{fig:userpayoffalpha}
   \end{minipage}
\end{figure*}

\com{
\begin{figure}[tt]
\centering
\includegraphics[width=0.34\textwidth]{comparingrevenue_Cs.eps}
\caption{Operator's normalized optimal expected profit as a function
of sensing cost $C_{s}$ and leasing cost $C_{l}$. The baseline is
the case without sensing.} \label{fig:comparingrevenueCs}
\end{figure}
}

Figure \ref{fig:comparingrevenueCs} illustrates the normalized
optimal expected profit as a function of the sensing cost. \com{The
baseline is the case without sensing.}We assume leasing cost
$C_l=2$, and thus the low sensing cost regime corresponds to the
case where $C_s\in[0.2,1]$ in the figure. It is clear that sensing
achieves a better optimal expected profit in this regime. In fact,
sensing leads to $250\%$ increase in profit when $C_{s}=0.2$. The
benefit decreases as the sensing cost becomes higher. When sensing
becomes too expensive, the operator will choose not to sense and
thus achieve the same profit as in the baseline case.


\begin{theorem}\label{prob2}
The operator's realized profit (i.e., the profit for a given
$\alpha$) is a strictly increasing function in $\alpha$ in the low
sensing cost regime. Furthermore, there exists a threshold
$\alpha_{th}\in(0,1)$ such that the operator's realized profit is
larger than the baseline approach if $\alpha>\alpha_{th}$.
\end{theorem}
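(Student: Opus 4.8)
The plan is to treat the realized profit purely as a function of the single variable $\alpha$, with the Stage-I decision frozen at its equilibrium value $B_s^{L*}$. From Table~\ref{tab:leasing} (equivalently, the low-sensing-cost columns of Table~\ref{tab:equilibrium}), the realized profit is the piecewise function $R(\alpha)=R_{II}^{CS1}(B_s^{L*},\alpha)$ for $\alpha\le Ge^{-(2+C_l)}/B_s^{L*}$ and $R(\alpha)=R_{II}^{CS2}(B_s^{L*},\alpha)$ for larger $\alpha$. First I would observe that the third case (ES3) never arises for $\alpha\in[0,1]$: since $B_s^{L*}\le Ge^{-2}$ we always have $B_s^{L*}\alpha\le Ge^{-2}$, so only the two conservative-supply branches are relevant.

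For the monotonicity claim I would differentiate each branch in $\alpha$. On the first branch $\partial R/\partial\alpha=B_s^{L*}C_l>0$ immediately. On the second branch one gets $\partial R/\partial\alpha=B_s^{L*}\bigl(\ln(G/(B_s^{L*}\alpha))-2\bigr)$, which is positive precisely when $B_s^{L*}\alpha<Ge^{-2}$. I would then verify that the two branches agree in value (and in fact in slope, both equal to $B_s^{L*}C_l$) at the switching point $\alpha_0=Ge^{-(2+C_l)}/B_s^{L*}$, so that $R(\alpha)$ is continuous (indeed $C^1$) and strictly increasing across $[0,1]$.

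The main obstacle is establishing the strict positivity of the second-branch derivative, which reduces to showing $B_s^{L*}<Ge^{-2}$ (so that $B_s^{L*}\alpha<Ge^{-2}$ for all $\alpha\le1$). Here I would use the first-order condition (\ref{eq:LowSensingAmount}) together with the strict concavity of $R_{II}^2$: evaluating $\partial R_{II}^2/\partial B_s$ at $B_s=Ge^{-2}$ yields exactly $(1-e^{-2C_l})/4-C_s$, which is $\le0$ under the standing assumption $C_s\ge(1-e^{-2C_l})/4$ of the low-sensing-cost regime. Concavity then forces the maximizer $B_s^{L*}$ to lie at or below $Ge^{-2}$, with the single degenerate equality $B_s^{L*}=Ge^{-2}$ occurring only at the boundary $C_s=(1-e^{-2C_l})/4$; even there the derivative vanishes only at the isolated point $\alpha=1$, so strict monotonicity on $[0,1]$ survives.

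For the threshold statement I would compare $R(\alpha)$ with the baseline profit, which is the no-sensing optimum $R_{\mathrm{base}}=Ge^{-(2+C_l)}$ obtained by setting $B_s=0$ in case (CS1). At $\alpha=0$ we have $R(0)=Ge^{-(2+C_l)}-B_s^{L*}C_s<R_{\mathrm{base}}$, since sensing has been paid for but yields nothing. To pin down the other endpoint I would invoke Observation~\ref{prob1}: the expected profit $R_I^L=E_\alpha[R(\alpha)]$ strictly exceeds $R_{\mathrm{base}}$ in the low-sensing-cost regime; combined with strict monotonicity this rules out $R(1)\le R_{\mathrm{base}}$ (otherwise $R(\alpha)<R_{\mathrm{base}}$ for almost every $\alpha$ and the expectation could not exceed the baseline), giving $R(1)>R_{\mathrm{base}}$. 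Continuity and strict monotonicity then let me apply the intermediate value theorem to obtain a unique $\alpha_{th}\in(0,1)$ with $R(\alpha_{th})=R_{\mathrm{base}}$, whence $R(\alpha)>R_{\mathrm{base}}$ exactly for $\alpha>\alpha_{th}$.
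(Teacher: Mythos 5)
Your proof is correct and follows essentially the same route as the paper's: the same two-branch decomposition into (CS1)/(CS2), the same branch-wise derivatives, continuity at the switching point $\alpha_0=Ge^{-(2+C_l)}/B_s^{L*}$, and an endpoint comparison plus the intermediate value theorem. The only real divergence is at $\alpha=1$: the paper directly verifies $R_{II}^{CS2}(1)>R_I^H$, whereas you infer $R(1)>R_{\mathrm{base}}$ from Observation~\ref{prob1} combined with strict monotonicity --- this works in the interior of the low sensing cost regime, but at the boundary $C_s=C_l/2$ the expected profit only weakly exceeds the baseline, so the paper's direct computation is the more robust step there. Your explicit justification that $B_s^{L*}\le Ge^{-2}$ via the first-order condition evaluated at $B_s=Ge^{-2}$ is a worthwhile detail that the paper merely asserts.
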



\begin{proof}
As in Table \ref{tab:equilibrium}, we have
two cases in the low sensing cost regime: 
%
\begin{itemize}
  \item If  $\alpha\leq G e^{-(2+C_l)}/B_s^{L*}$, then
substituting $B_s^{L*}$ into $R_{II}^{CS1}(B_s,\alpha)$ in Table
\ref{tab:leasing} leads to the realized profit
$$
R_{II}^{CS1}(\alpha)=Ge^{-(2+C_l)}-B_s^{L*}C_s+B_s^{L*}\alpha C_l,$$
which is strictly and linearly increasing in $\alpha$.

\item If $\alpha\geq Ge^{-(2+C_l)}/B_s^{L*}$,
then substituting $B_s^{L*}$ into $R_{II}^{CS2}(B_s,\alpha)$ in
Table \ref{tab:leasing} leads to the realized profit
$$
R_{II}^{CS2}(\alpha)=B_s^{L*}\alpha\left(\ln\left(\frac{G}{B_s^{L*}\alpha}\right)-1\right)-B_s^{L*}
C_s.$$ Because the first-order derivative
$$\frac{\partial R_{II}^{CS2}(\alpha)}{\partial \alpha}=B_s^{L*}\left(\ln\left(\frac{G}{B_s^{L*}\alpha}\right)-2\right)>0,$$
as $B_s^{L*}\leq Ge^{-2}$, $R_{II}^{CS2}(\alpha)$ is strictly
increasing in $\alpha$.
\end{itemize}
We can also verify that $R_{II}^{CS1}(\alpha)=R_{II}^{CS2}(\alpha)$
when $\alpha=G e^{-(2+C_l)}/B_s^{L*}$. Therefore, the realized
profit is a continuous and strictly increasing function of $\alpha$.

Next we prove the existence of threshold  $\alpha_{th}$. First
consider the extreme case  $\alpha=0$. Since the operator obtains no
bandwidth through sensing but still incurs some cost, the profit in
this case is lower than the baseline case. Furthermore, we can
verify that $R_{II}^{CS2}(1)>R_I^H$ in Table \ref{tab:equilibrium},
thus the realized profit at $\alpha=1$ is always larger than the
baseline case. Together with the continuity and strictly increasing
nature of the realized profit function, we have proven the existence
of threshold of $\alpha_{th}$.
\end{proof}


\com{
\begin{figure}[tt]
\centering
\includegraphics[width=0.34\textwidth]{comparingrevenue_alpha.eps}
\caption{Operator's normalized optimal realized profit as a function
of the sensing realization factor $\alpha$. The baseline is the case
without sensing.} \label{fig:comparingrevenuealpha}
\end{figure}
}

Figure \ref{fig:comparingrevenuealpha} shows the realized profit as
a function of $\alpha$ for different costs.  The realized profit is
increasing in $\alpha$ in both cases. The ``crossing'' feature of
the two increasing curves is because the optimal sensing $B_s^\ast$
is larger under a cheaper sensing cost ($C_s=0.5$), which leads to
larger realized profit loss (gain, respectively) when $\alpha
\rightarrow 0$ ($\alpha\rightarrow 1$, respectively).
This shows the tradeoff between improvement of expected profit and
the large variability of the realized profit.


\com{
\begin{figure}[tt]
\centering
\includegraphics[width=0.3\textwidth]{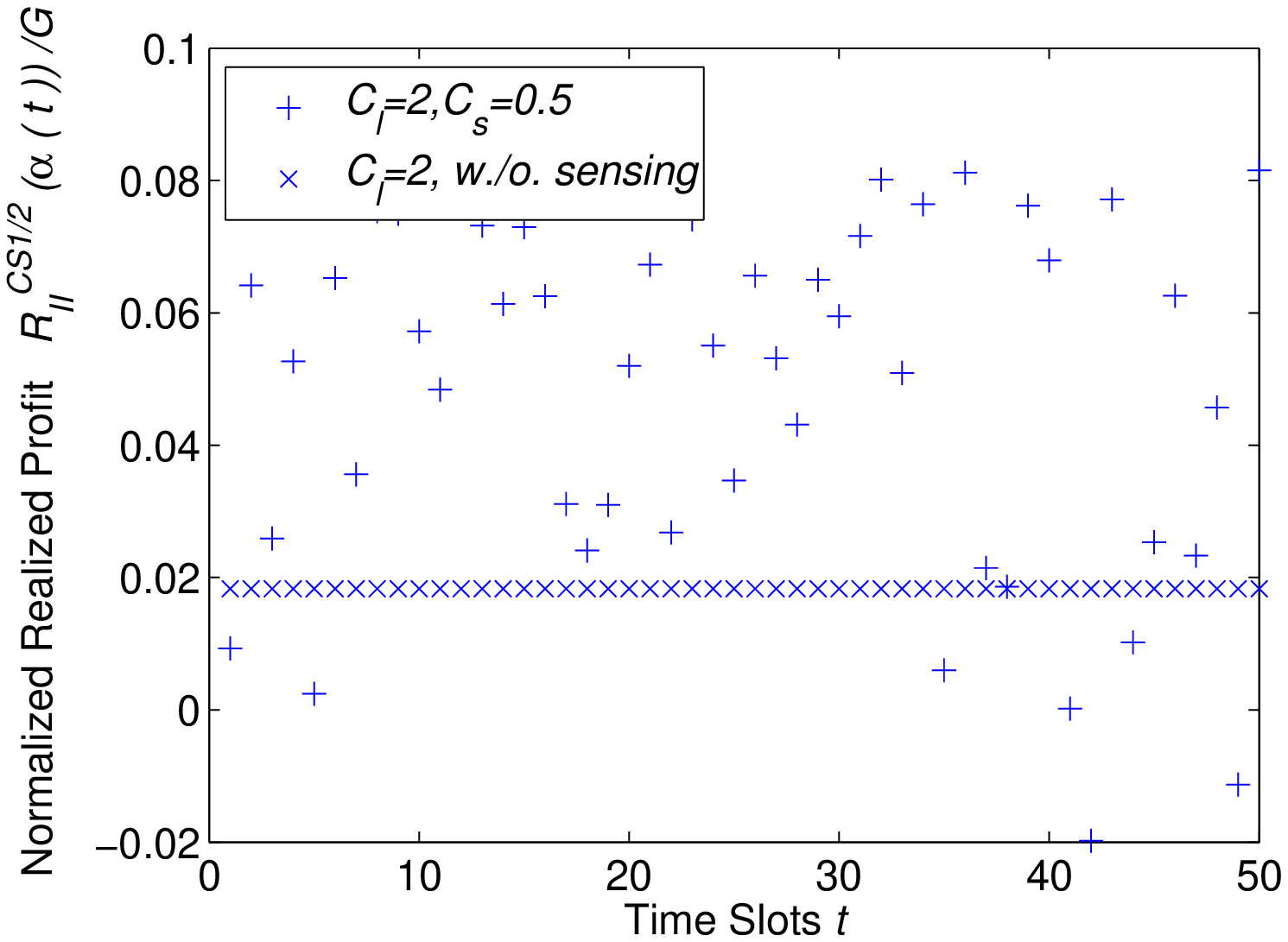}
\caption{Operator's normalized optimal realized profit as a function
of the sensing realization factor $\alpha$ in different time slots.}
\label{fig:comparingrevenuetime}
\end{figure}
}

\com{Figure \ref{fig:comparingrevenuetime} shows how the operator's
realized profit changes over time as $\alpha$ changes randomly in
each time slot according to a uniform distribution.  It is clear
that sensing leads to an increase in the expected (time average)
profit.}

%
%

\begin{theorem}\label{prob3}
Users always benefit from the availability of spectrum sensing in
the low sensing cost regime.
\end{theorem}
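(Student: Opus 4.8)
The plan is to compare user $i$'s equilibrium payoff when sensing is available against its payoff in the baseline case without sensing, evaluated for each realization of the sensing factor $\alpha$. The baseline corresponds exactly to the high sensing cost regime in Table \ref{tab:equilibrium}, where the operator does not sense ($B_s^*=0$), leases $Ge^{-(2+C_l)}$, and charges $\pi^*=1+C_l$; there user $i$'s payoff is $g_i e^{-(2+C_l)}$. It therefore suffices to show that in the low sensing cost regime user $i$'s realized payoff is always at least $g_i e^{-(2+C_l)}$, with strict improvement occurring with positive probability.

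First I would split the low sensing cost regime into the two sub-cases given in the last two columns of Table \ref{tab:equilibrium}. When the sensing realization is poor, $\alpha \le Ge^{-(2+C_l)}/B_s^{L*}$, the operator leases up to the threshold and still charges $\pi^*=1+C_l$, so user $i$'s payoff is exactly $g_i e^{-(2+C_l)}$ --- identical to the baseline. When the sensing realization is good, $\alpha > Ge^{-(2+C_l)}/B_s^{L*}$, the price drops and user $i$'s payoff becomes $g_i\left(B_s^{L*}\alpha/G\right)$, as listed in the table.

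The key step is to establish that in the good-realization sub-case the payoff strictly exceeds the baseline. This follows immediately from the defining condition of the sub-case itself: $\alpha > Ge^{-(2+C_l)}/B_s^{L*}$ rearranges to $B_s^{L*}\alpha/G > e^{-(2+C_l)}$, whence $g_i\left(B_s^{L*}\alpha/G\right) > g_i e^{-(2+C_l)}$. I would also check continuity at the boundary $\alpha = Ge^{-(2+C_l)}/B_s^{L*}$, where the two payoff expressions coincide and both equal $g_i e^{-(2+C_l)}$. Combining the two sub-cases, user $i$'s realized payoff with sensing weakly dominates the baseline for every $\alpha$, and strictly dominates whenever $\alpha$ exceeds the threshold; since that event has positive probability, the expected payoff is strictly larger, which is the claim.

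Because the argument reduces to reading off the sign of an inequality that is precisely the defining condition of each sub-case, there is no genuine computational obstacle. The only points requiring care are correctly matching the baseline to the high sensing cost column of Table \ref{tab:equilibrium}, and verifying that $B_s^{L*}\in\left(Ge^{-(2+C_l)},Ge^{-2}\right]$ (from the Stage I analysis) forces the threshold $Ge^{-(2+C_l)}/B_s^{L*}$ to lie strictly inside $(0,1)$, so that both sub-cases indeed occur with positive probability and the improvement in expectation is strict.
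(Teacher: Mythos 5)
Your proposal is correct and follows essentially the same route as the paper: both arguments reduce to reading off Table \ref{tab:equilibrium}, with the paper comparing equilibrium prices ($\pi^*\le 1+C_l$, payoff decreasing in price) while you compare the payoff entries directly, which is equivalent since $u_i(\pi,w_i^*(\pi))=g_ie^{-(1+\pi)}$ is monotone in $\pi$. The extra care you take at the boundary $\alpha=Ge^{-(2+C_l)}/B_s^{L*}$ and in checking that the good-realization event has positive probability is a harmless refinement of the same argument.
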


\begin{proof}
In the baseline approach without sensing, the operator always
charges the price $1+C_l$. As shown in Table \ref{tab:equilibrium},
the equilibrium price $\pi^\ast$ with sensing is always no larger
than $1+C_l$ for any value of $\alpha$. Since a user's payoff is
strictly decreasing in price,  the users always benefit  from
sensing.
\end{proof}



\com{
\begin{figure}[tt]
\centering
\includegraphics[width=0.33\textwidth]{userpayoff_alpha.eps}
\caption{User $i$'s normalized optimal realized payoff as a function
of the sensing realization factor $\alpha$.}
\label{fig:userpayoffalpha}
\end{figure}
}

Figure \ref{fig:userpayoffalpha} shows how a user $i$'s normalized
realized payoff $u_i^*/g_i$ changes with $\alpha$. The payoff
linearly increases in $\alpha$ when $\alpha$ becomes larger than a
threshold, in which case the equilibrium price becomes lower than
$1+C_l$. A smaller sensing cost $C_s$ leads to more aggressive
sensing and thus more benefits to the users.

\section{Conclusions and Future Work}
\label{sec:conclusion}

This paper represents {some initial results}  towards understanding
the new business models, opportunities, and challenges of the
emerging cognitive virtual mobile network operators (C-MVNOs) {under
supply uncertainty}. Here we focus on studying the trade-off between
the cost and uncertainty of spectrum investment through sensing and
leasing.
We model the interactions between the operator and the users by a
{Stackelberg game}, which captures the wireless heterogeneity of
users in terms of maximum transmission power levels and channel
gains.

We have discovered several interesting features of the game
equilibrium. We show that the operator's optimal sensing, leasing,
and pricing decisions follow nice threshold structures. The
availability of sensing always increases the operator's expected
profit, despite that the realized profit in each time slot will have
some variations depending on the sensing result. Moreover, users
always benefit in terms of payoffs when sensing is performed by the
operator.

{To keep the problem tractable, we have made several assumptions
throughout this paper. Some assumptions can be (easily) generalized
without affecting the main insights.
\begin{itemize}
\item \emph{Imperfect sensing}:  we can
incorporate imperfect spectrum sensing (i.e., miss-detection and
false-positive) into the model, which will change the uncertainty of
the spectrum sensing. Given that our results work for any
distribution of the sensing realization $\alpha$, it is likely that
such generalization does not change the major insights.

\item \emph{Learning}: we can also consider the
interactions of multiple time slots, where the sensing realizations
of previous time slots can be used to update the distributions of
$\alpha$ in future time slots. Again, the per slot decision model
introduced in this paper is still applicable with a time-dependent
$\alpha$ distribution input.
\end{itemize}

Generalizations of some other assumptions, however, lead to more
challenging new problems.
\begin{itemize}
\item \emph{Incomplete information}:  when the operator does not know the
information of the users, the system needs to be modeled as a
dynamic game with incomplete information. More elaborate economic
models such as screening and signaling \cite{rasmusen2007games}
become relevant.

\item \emph{Time scale separation}: it is possible that dynamic leasing is performed at a different (much larger) time scale compared with spectrum sensing. In  that case, the operator has to make the leasing decision first, and then make several sequential sensing decisions. This leads to a dynamic decision model with more stages and tight couplings across sequential decisions.

\item \emph{Operator competition}: There may be multiple C-MVNOs providing services in the same geographic area. In that case, the operators need to attract the users through price competition. Also, if they sense and lease from the same spectrum owner, the operators may have overlapping or conflicting resource requests. Although we have obtained some preliminary results along this line in \cite{ourDySPAN}, more studies are definitely desirable.
\end{itemize}

Through the analytical and simulation study of an idealized model in
this paper, we have obtained various interesting engineering and
economical insights into the operations of C-MVNOs. We hope that
this paper can contribute to the further understanding of proper
network architecture decisions and business models of future
cognitive radio systems. }

\appendices
\section{Proof of Theorem \ref{Thm:pricing}}\label{app:thm1} Given
the total bandwidth $B_l+B_s\alpha$, the objective of Stage III is
to solve the optimization problem (\ref{eq:maxmin}), i.e.,
$\max_{\pi\geq0}\min(D(\pi),S(\pi))$.  First, by examining the
derivative of $D(\pi)$, i.e., ${\partial D(\pi)}/{\partial
\pi}=(1-\pi)Ge^{-(1+\pi)},$ we can see that the continuous function
$D(\pi)$ is increasing in $\pi\in[0,1]$ and decreasing in
$\pi\in[1,+\infty]$, and $D(\pi)$ is maximized when $\pi=1$. Since
$S(\pi)$ always increases in $\pi$ and $D(\pi)$ is concave over
$\pi\in[0,1]$,  $S(\pi)$ intersects with $D(\pi)$ if and only if
$\frac{\partial D(\pi)}{\partial \pi}>\frac{\partial
S(\pi)}{\partial \pi}$ at $\pi=0$, i.e., $B_l+B_s\alpha< Ge^{-1}$.

Next we divide our discussion into the intersection case  and the
non-intersection case:

\begin{enumerate}
\item Given $B_{l}+B_{s}\alpha\leq Ge^{-1}$, $S(\pi)$ intersects with $D(\pi)$. By solving equation
$S(\pi)=D(\pi)$ the intersection point is
$\pi=\ln\left(\frac{G}{B_{l}+B_{s}\alpha}\right)-1$. There are two
subcases:

\begin{itemize}
\item when $B_{l}+B_{s}\alpha\leq Ge^{-2}$, $S(\pi)$ intersects with $D(\pi)$, and $\min(D(\pi),S(\pi))$ is
maximized at the intersection point, i.e.,
$\pi^*=\ln\left(\frac{G}{B_{l}+B_{s}\alpha}\right)-1$. (See
$S_3(\pi)$ in Fig.~\ref{fig:intersection}.)

\item when $B_{l}+B_{s}\alpha\geq Ge^{-2}$, $S(\pi)$ intersects with $D(\pi)$, and
$\min(D(\pi),S(\pi))$ is maximized at the maximum value of $D(\pi)$,
i.e., $\pi^{*}=1$. (See $S_2(\pi)$ in Fig.~\ref{fig:intersection}.)
\end{itemize}

\item Given $B_{l}+B_{s}\alpha\geq Ge^{-1}$, $S(\pi)$ doesn't intersect with $D(\pi)$. Then
$\min(D(\pi),S(\pi))$ is maximized at the maximum value of $D(\pi)$,
i.e., $\pi^{*}=1$. (See $S_1(\pi)$ in Fig.~\ref{fig:intersection}.) \hfill\rule{2mm}{2mm}
\end{enumerate}

\section{Proof of Theorem \ref{thm:leasing}}\label{app:thm2}
Given the sensing result $B_s\alpha$, the objective of Stage II is
to solve the decomposed two subproblems (\ref{eq:leasingSub2}) and
(\ref{eq:leasingSub1}), and select the best one with better optimal
performance. Since $R_{III}^{ES}(B_s,\alpha,B_l)$ in subproblem
(\ref{eq:leasingSub2}) is linearly decreasing in $B_l$, its optimal
solution always lies at the lower boundary of the feasible set
(i.e., $B_l^*=\max\{Ge^{-2}-B_s\alpha,0\}$). We compare the optimal
profits of two subproblems (i.e., $R_{II}^{ES}(B_s,\alpha)$ and
$R_{II}^{CS}(B_s,\alpha)$) for different sensing results:

\begin{enumerate}
\item Given $B_s\alpha>Ge^{-2}$, the
obtained bandwidth after Stage I is already in excessive supply
regime. Thus it is optimal not to lease for subproblem
(\ref{eq:leasingSub2}) (i.e., $B_l^{ES3}=0$ of case (ES3) in Table
\ref{tab:leasing}).

\item Given $0\leq B_s\alpha\leq Ge^{-2}$, the optimal
leasing decision for subproblem (\ref{eq:leasingSub1}) is
$B_l^*=Ge^{-2}-B_s\alpha$ and we have
$R_{III}^{ES}(B_s,\alpha,B_l)=R_{III}^{CS}(B_s,\alpha,B_l)$ when
$B_l=Ge^{-2}-B_s\alpha$, thus the optimal objective value of
(\ref{eq:leasingSub2}) is always no larger than that of
(\ref{eq:leasingSub1}) and it is enough to consider the conservative
supply regime only. Since $$\frac{\partial^2
R_{III}^{CS}(B_s,\alpha,B_l)}{\partial
B_l^2}=-\frac{1}{B_l+B_s\alpha}<0,$$
$R_{III}^{CS}(B_s,\alpha,B_l)$ is concave in $0\leq B_l\leq G
e^{-2}-B_s\alpha$. Thus it is enough to examine the first-order
condition $$\frac{\partial R_{III}^{CS}(B_s,\alpha,B_l)}{\partial
B_l} =\ln\left(\frac{G}{B_l+B_s\alpha}\right)-2-C_l =0,$$ and the
boundary condition $0\leq B_l\leq Ge^{-2}-B_s\alpha$. This results
in optimal leasing decision $B_l^*=\max(G e^{-(2+C_l)}-B_s\alpha,0)$
and leads to $B_l^{CS1}=G e^{-(2+C_l)}-B_s\alpha$ and $B_l^{CS2}=0$
of cases (CS1) and (CS2) in Table \ref{tab:leasing}.
\end{enumerate}

By substituting $B_l^{CS1}$ and $B_l^{CS2}$ into
$R_{III}^{CS}(B_s,\alpha,B_l)$ in Table \ref{tab:pricing}, we derive
the corresponding optimal profits $R_{II}^{CS1}(B_s,\alpha)$ and
$R_{II}^{CS2}(B_s,\alpha)$ in Table \ref{tab:leasing}.
$R_{II}^{ES3}(B_s,\alpha)$ can also be obtained by substituting
$B_l^{ES3}$ into
$R_{III}^{ES}(B_s,\alpha,B_l)$.\hfill\rule{2mm}{2mm}

\section{Supplementary Proof of Theorem
\ref{thm:robustness}}\label{app:thm4}

In this section, we prove that Observations \ref{ob3} and \ref{ob4}
hold for the genera case (i.e., the general SNR regime and a general
distributions of $\alpha$). We first show that Observation \ref{ob4}
holds for the general case.

\subsection{Threshold structure of sensing} It is not difficult to show that if the
sensing cost is much larger than the leasing cost, the
operator has no incentive to sense but will directly lease. Thus the
threshold structure on the sensing decision in Stage I still holds
for the general case. We ignore the details due to space limitations.

\subsection{Threshold structure of leasing} Next we show the threshold
structure on leasing in Stage II also holds. Similar as in the proof
of Theorem \ref{Thm:pricing}, we define $D(\pi)=\pi\frac{G}{Q(\pi)}$
and $S(\pi)=\pi(B_s\alpha+B_l)$.
\begin{itemize}
\item We first show that $D(\pi)$ is increasing when $\pi\in[0,0.468]$
and decreasing when $\pi\in[0.468,+\infty)$. To see this, we take
the first-order derivative of $D(\pi)$ over $\pi$,
$$D'(\pi)=\frac{2Q(\pi)^2+Q(\pi)-(1+Q(\pi))^2\ln(1+Q(\pi))}{Q(\pi)^3},$$
which is positive when $Q(\pi)\in[0,2.163)$ and negative when
$Q(\pi)\in[2.163,+\infty)$. Since eq. (\ref{eq:dQ(pi)}) shows that
$Q(\pi)$ is increasing in $\pi$ and $\pi(Q)\mid_{Q=2.163}=0.468$, as
a result $D(\pi)$ is increasing in $\pi\in[0,0.468]$ and decreasing
in $\pi\in[0.468,+\infty)$.  In other words, $D(\pi)$ is maximized
at $\pi=0.468$.

\begin{figure}[tt]
\centering
\includegraphics[width=0.3\textwidth]{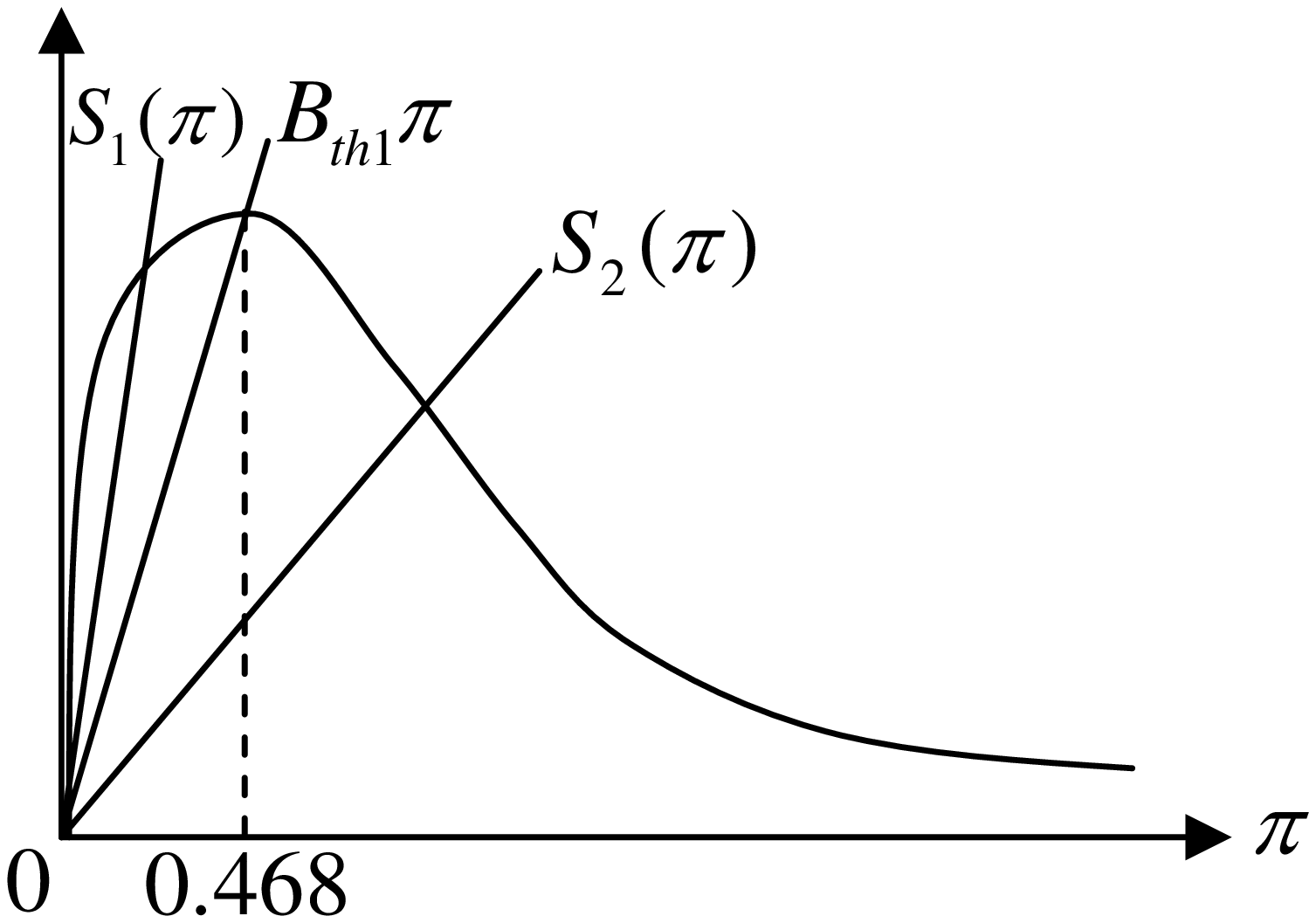}
\caption{Different intersection cases of $S(\pi)$ and $D(\pi)$ in
the general SNR regime.} \label{fig:inter_general}
\end{figure}

\item Next we derive the operator's optimal pricing decision in Stage
III. Figure \ref{fig:inter_general} shows two possible intersection
cases of $S(\pi)$ and $D(\pi)$.  $B_{th1}$ is defined as the total
bandwidth obtained in Stages I and II (i.e., $B_{s}\alpha +B_{l}$)
such that $S(\pi)$ intersects with $D(\pi)$ at $\pi=0.468$. Here is
how the optimal pricing is determined:
\begin{itemize}
\item If $B_s\alpha+B_l\geq B_{th1}$ (e.g., $S_1(\pi)$ in Fig.
\ref{fig:inter_general}), the optimal price is $\pi^*=0.468$. The
total supply is no smaller (and often exceeds) the total demand.

\item If $B_s\alpha+B_l< B_{th1}$ (e.g.,
$S_2(\pi)$ in Fig. \ref{fig:inter_general}), the optimal price
occurs at the unique intersection point of $S(\pi)$ and $D(\pi)$
(where $D(\pi)$ has a negative first-order derivative). The total
supply equals total demand.
\end{itemize}


\item Now we are ready to show the threshold structure of the leasing decision.
\begin{itemize}
\item If the sensing result from Stage I satisfies $B_s\alpha\geq B_{th1}$, then the operator will not lease. This is because leasing will only increase the total cost without increasing the revenue, since the optimal price is fixed at $\pi^{\ast}=0.468$ and thus revenue is also fixed at $D(\pi^{\ast})$.

\begin{figure}[tt]
\centering
\includegraphics[width=0.3\textwidth]{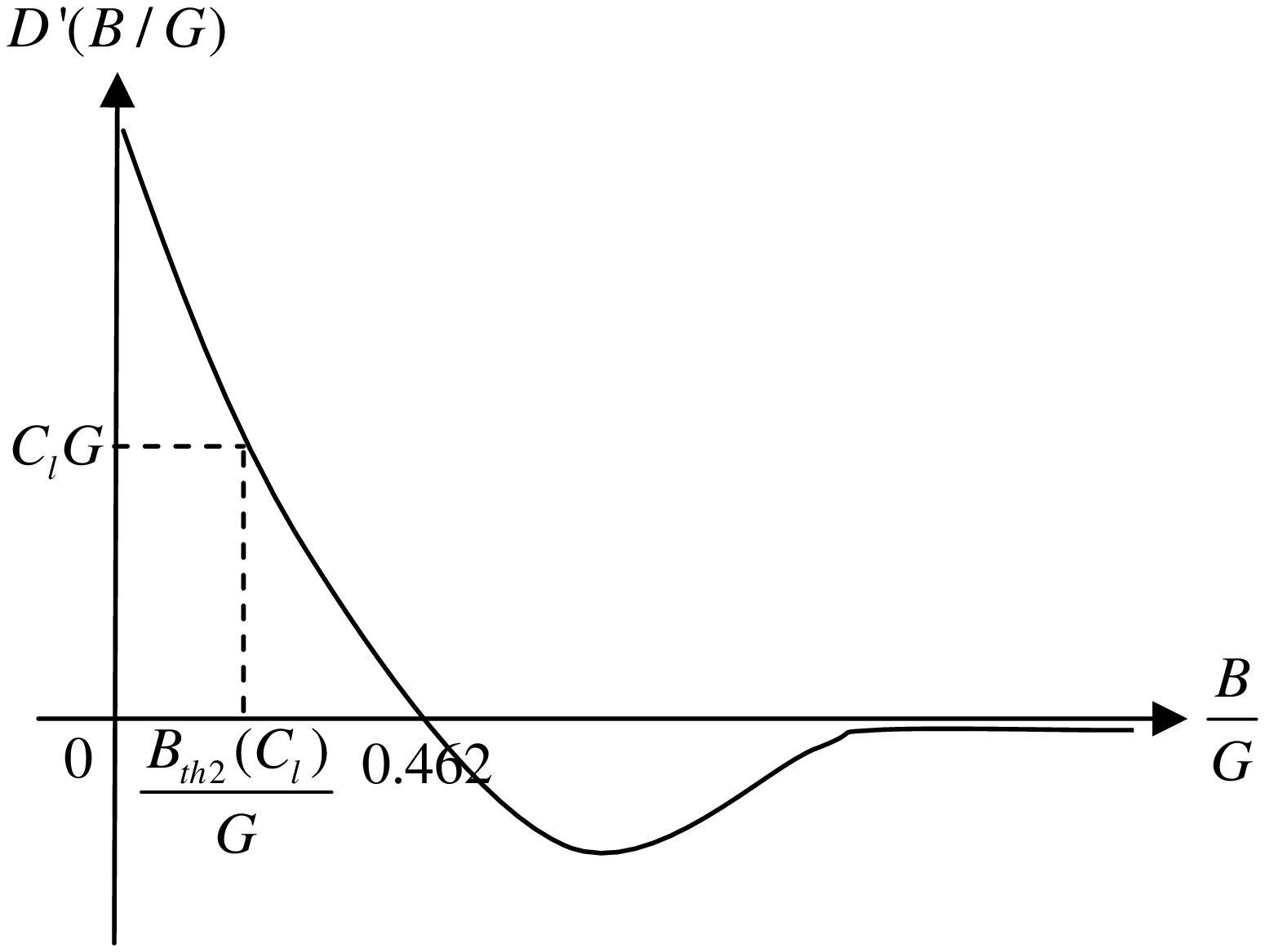}
\caption{The relation between the normalized total bandwidth $B/G$
and the derivative of the revenue $D'(B/G)$.} \label{fig:dDdB}
\end{figure}

\item Let us focus on the case where the sensing result from Stage I satisfies $B_s\alpha< B_{th1}$. Let us define $B=B_s\alpha+B_l$, then we have $B=G/Q(\pi)$
and $\pi=\ln(1+G/B)-G/(G+B)$. This enables us to rewrite  $D(\pi)$
as a function of total resource $B$ only,
$$D(B)=B\left[\ln\left(1+\frac{G}{B}\right)-\frac{G}{G+B}\right].$$
The first-order derivative of $D(B)$ is
\begin{equation}\label{eq:dD(B)}
D'(B)=\ln\left(1+\frac{1}{B/G}\right)-\frac{1}{1+B/G}-\frac{1}{(1+B/G)^2},\end{equation}
which denotes the increase of revenue $D(B)$ due to unit increase in
bandwidth $B$.
Since obtaining each unit bandwidth has a cost of $C_{l}$ in Stage
II,  the operator will only lease positive amount of bandwidth if
and only if $D'(B_{s}\alpha)>C_l$. To facilitate the discussions, we
will plot the function of $D'(B/G)$ in Fig.~\ref{fig:dDdB}, with the
understanding that $D'(B/G)=D'(B)G$. The intersection point of
$B/G=0.462$ in Fig.~\ref{fig:dDdB} corresponds to the point of
$\pi=0.468$ in Fig.~\ref{fig:inter_general}. The positive part of
$D'(B)$ on the left side of $B/G=0.462$ in Fig.~\ref{fig:dDdB}
corresponds to the part of $D(\pi)$ with a negative first-order
derivative in Fig.~\ref{fig:inter_general}.
%
%
For any value $C_l$, Fig.~\ref{fig:dDdB} shows that there exists a
unique threshold $B_{th2}(C_{l})$ such that
$D'(B_{th2}(C_{l})/G)=C_{l}G$, i.e., $D'(B_{th2}(C_{l}))=C_{l}$.
Then the optimal leasing amount will be $B_{th2}(C_{l})-B_{s}\alpha$
if the bandwidth obtained from sensing $B_{s}\alpha$ is less than
$B_{th2}(C_{l})$, otherwise it will be zero.
%
\end{itemize}

\end{itemize}

\subsection{Threshold structure of pricing and Observation \ref{ob3}}
Based on the proofs above, we show that Observation \ref{ob3} also
holds for the general case as follows. Let us denote the optimal
sensing decision as $B_s^*$, and consider two sensing realizations
$\alpha_1$ and $\alpha_2$ in time slots 1 and 2, respectively.
Without loss of generality, we assume that $\alpha_1<\alpha_2$.
\begin{itemize}
\item If $B_s^*\alpha_2\geq B_{th1}$, then the optimal price in time slot
2 is $\pi^*=0.468$ (see Fig.~\ref{fig:inter_general}). The optimal
price in time slot 1 is always no smaller than $0.468$.

\item If $B_s^*\alpha_1< B_s^*\alpha_2<B_{th1}$, then we need to consider three subcases:
\begin{itemize}
\item If $B_s^*\alpha_1< B_s^*\alpha_2\leq B_{th2}(C_l)$, then
the operator will lease up to the threshold in both time slots,
i.e., $B_l^*=B_{th2}(C_l)-B_s^*\alpha_1$ in time slot 1 and
$B_l^*=B_{th2}(C_l)-B_s^*\alpha_2$ in time slot 2. Then optimal
prices in both time slots are the same.

\item If $B_s^*\alpha_1\leq  B_{th2}(C_l)< B_s^*\alpha_2$, then the
operator  will lease $B_l^*=B_{th2}(C_l)-B_s^*\alpha_1$ in time slot
1 and will not lease in time slot 2. Thus the total bandwidth in
time slot 1 is smaller than that of time slot 2, and the optimal
price in time slot 1 is larger.

\item If $B_{th2}(C_l)\leq B_s^*\alpha_1< B_s^*\alpha_2$, then
the operator in both time slots will not lease and total bandwidth
in time slot 1 is smaller, and the optimal price in time slot 1 is
larger.
\end{itemize}
To summarize, the optimal price $\pi^*$ in Stage III is
non-increasing in $\alpha$. And the operator will charge a constant
price ($\pi^*=0.468$) to the users as long as the total bandwidth
obtained through sensing and leasing does not exceed the threshold
$B_{th2}(C_l)$.\hfill\rule{2mm}{2mm}
\end{itemize}

\bibliographystyle{ieeetran}
\bibliography{references_workingpaper}

\end{document}